\renewcommand\@biblabel[1]{#1.}
\numberwithin{equation}{section}
\newcommand{\mH}{\mathcal{H}}
\newcommand{\tI}{\tilde{I}}
\newcommand{\R}{\mathds{R}}
\newcommand{\Z}{\mathds{Z}}
\newcommand{\hG}{\hat{\Gamma}}
\newcommand{\ph}{\varphi}
\renewcommand{\o}{\omega}
\newcommand{\e}{\varepsilon}
\renewcommand{\l}{\lambda}
\renewcommand{\a}{\alpha}
\newcommand{\beq}{\begin{equation*}}
\newcommand{\eeq}{\end{equation*}}
\newcommand{\bneq}{\begin{equation}}
\newcommand{\eneq}{\end{equation}}
\newtheorem{lemma}{Lemma}[section]
\newtheorem{theorem}{Proposition}[section]
\author{Matthias Plaschke \\ \\ \small{Faculty of Physics, University of Vienna}}
\title{Local Anyonic Quantum Fields on the Circle leading to Cone-Local Anyons in Two Dimensions}
\date{\today}
\begin{document}
\maketitle

\begin{abstract}
    Using the method of implementable one-particle Bogoliubov transformations it is possible to explicitly define a local covariant net of quantum fields on the (universal covering of the) circle $S_1$ with braid group statistics. These Anyon fields transform under a representation of $\widetilde{U(1)}$ for arbitrary real-valued spin and their commutation relations depend on the relative winding number of localization regions. By taking the tensor product with a local covariant field theory on $\R^2$ one can obtain a (non-boost-covariant) cone-localized field net for Anyons in two dimensions.
\end{abstract}

%

\vspace{2em}
\section{Introduction}
In $d=1+1$ dimensions, localized \emph{massive} quantum fields with Anyonic commutation relations can be explicitly constructed by defining ``order'' and ``disorder'' operators, using implementable Bogoliubov transformations on the 1-particle space (e.g. \cite{Adler96,Salv07,Ruij81} etc.).
In $2+1$ dimensions, models exhibiting Anyon-like features often start with a classical gauge field theory with additional Chern-Simons and Higgs terms, which then leads to fields carrying electric charge and magnetic flux. It is believed that such models can be used in the description of high-temperature superconductivity, the quantum Hall effect or other two-dimensional physical systems. However, such constructions can only be made rigorous by using, for example, a lattice approximation (see e.g. \cite{FroeMar89}).
On the other hand more explicit examples of Anyon quantum fields as for instance in \cite{Sem88, LigMinRos93, Marino93,GrazRoth94} lead to formal commutation relations of the form
\beq
\phi(x_1) \phi(x_2) \sim e^{i\pi\lambda \operatorname{sign}[\operatorname{arg}(x_1-x_2)]} \phi(x_2) \phi(x_1),
\eeq
where $\operatorname{arg}(x)$ denotes the angle of the vector $x \in \R^2$. After smearing the point-fields with testfunctions such commutation relations would lead to localization of the operators in smeared out ``double strings'', extending to infinity in two opposite directions. However, to speak of ``proper'' Anyons in the sense of algebraic quantum field theory one would need field algebras localized in spacelike cones $C$, extending to infinity only in a connected compact set of spacelike directions. More precisely, the localization regions for Anyons can be labelled by paths $\tilde{C}$ of such spacelike cones depending also on a kind of winding number which determines the commutation relations of mutually spacelike separated fields \cite{BuchFred82, FreRehSchr92, FroeMar91, FreGabRue96, Mund03}(instead of heaving a mere sign-function in the exchange phase). The spin statistics theorem for Anyons \cite{FroeMar91,Mund09} then forces these fields to transform under a representation of the (covering of the) Poincaré group $\widetilde{\mathcal{P}}_+^\uparrow$ with arbitrary real spin $s\in\R$. \\
It would be desirable to find an explicit construction for such cone-localized Anyon fields for any spin $s$ in $2+1$ dimensions. Unfortunately a No-Go theorem by Bros and Mund \cite{BrosMund12} states that there are no \emph{free} fields for Anyons. More precisely, they prove that the scattering matrix for a relativistic cone-local model for Anyons (satisfying in addition the Reeh-Schlieder property) always has to be non-trivial. This also means that there are no Anyon fields which simply create single particle states from the vacuum \cite{Mund98}. This makes the explicit non-perturbative construction of Anyons very complicated. \\
It is, however, possible to circumvent this No-Go theorem by considering fields which are only localized in wedge regions in Minkowski space. In this case one can modify the method of multiplicative deformations, established in \cite{LechGro08,Lech12}, to obtain one-particle generators localized in so-called ``paths of wedges'' which have anyonic commutation relations and are covariant w.r.t a representation of $\widetilde{\mathcal{P}}_+$ for spin $s\in\R$ \cite{Pla13}. This is only possible because in the proof of the No-Go theorem one needs to have three mutually spacelike separated localization regions, which is impossible for wedges. \\
The No-Go theorem now makes it very difficult to explicitly construct Anyon fields in $d=2+1$ with sharper localization. To better understand the concepts of winding number, anyonic commutation relations and arbitrary spin and their connection, we will therefore consider the simplified case of Anyon quantum fields on the circle. A similar construction using also the current algebra on the circle (see also \cite{BuchMackTod88}) can be found in \cite{CarLang98,CarLang10} where, however, the phase factor appearing in the commutation relations is again governed by a sign-function. After smearing the point-fields this leads to localization in two disconnected regions lying on opposite sides of the circle. Here we will modify this construction in such a way that one obtains a compactly localized field algebra, covariant under a real spin representation of the rotations and with commutation relations depending on the relative winding number of the respective localization regions.\\
By taking tensor products of these fields on the circle with local fields on $\R^{2+1}$ one arrives at a rotation- and translation covariant cone-localized quantum field theory with anyonic commutation relations. However, due to the simple tensor product structure, covariance w.r.t. boosts is lost in the construction.

\vspace{1em}
\section{Aim} \label{aim}
Because the spin for Anyons is an arbitrary real number, rotations around multiples of $2\pi$ will act non-trivially on our fields. We will therefore consider the fields to be localized in intervals $\tI$ on the \emph{universal covering} $\widetilde{S_1}$ of the circle with radius $R=1$ (the generalization to any $R>0$ being straightforward). Our aim is to construct a net $\tI \mapsto \mathcal{F}(\tI)$ of $*$-algebras of operators acting on a Hilbert space $\mH$ satisfying the following properties, which define an anyonic field net on the circle\footnote{A related construction also regarding nets of von Neumann algebras on covering spaces of the circle has recently been provided in the master thesis of F. Bonesi. This idea of construction, using modular theory of standard subspaces of Hilbert spaces, is essentially contained in the lecture notes on conformal nets of Roberto Longo (see \url{http://www.mat.uniroma2.it/~longo/Lecture_Notes.html}).}:
\begin{enumerate}
    \item {\bf Charge Sectors:} The Hilbert space splits into a direct sum of Hilbert spaces for fixed charge,
        \bneq
        \mH = \bigoplus_{q\in\Z} \mH_q,
        \eneq
        and $\mH_0$ contains a unique rotation invariant vacuum vector $\Omega$. Moreover, the local algebras $\mathcal{F}(\tI)$ are generated by basic fields $\Phi$ that change the charge of a vector by $1$, i.e. $\Phi \mH_q \subset \mH_{q+1}$ and $\Phi^* \mH_q \subset \mH_{q-1}$.
    \item {\bf Isotony:} The map $\tI \mapsto \mathcal{F}(\tI)$ preserves inclusions, i.e. $\mathcal{F}(\tI_1) \subset \mathcal{F}(\tI_2)$ if $\tI_1 \subset \tI_2$.
    \item {\bf Covariance and Spin:} The field algebras are covariant under a representation $U$ of the universal covering group of the rotations $\widetilde{U(1)}\simeq \R$, i.e.
        \bneq
        U(\o)\mathcal{F}(\tI) U(\o)^* \subset \mathcal{F}(\tI+\o), \hspace{1em} \o \in \R
        \eneq
        and $2\pi$ rotations act as
        \bneq \label{2PiRep}
        U(2\pi) = \sum_{q\in\Z} e^{2\pi i S_q} P_q ,
        \eneq
        where $P_q$ is the projector onto the charge $q$ subspace and $S_q \in \R$ is the spin of the sector with charge $q$ (defined only modulo $1$).
    \item {\bf (Twisted) Locality:} Basic fields $\Phi_1, \Phi_2$ localized in intervals $\tI_1, \tI_2$, whose projections $I_1, I_2$ onto the base space $S_1$ do not intersect, satisfy commutation relations of the form
        \bneq \label{CommRel}
        \Phi_1\ \Phi_2 = \pm e^{2\pi i s (2 N(\tI_1,\tI_2)+1)}\ \Phi_2\ \Phi_1,
        \eneq
        where $s\in\R$ is a real parameter and $N(\tI_1,\tI_2) \in \Z$ is the relative winding number of $\tI_1$ w.r.t. $\tI_2$, which for consistency reasons has to satisfy $N(\tI_2,\tI_1) = - N(\tI_1,\tI_2)-1$ and $N(\tI_1+2\pi,\tI_2) = N(\tI_1,\tI_2)+1$.
\end{enumerate}
\vspace{1em}
\emph{Remarks:} \\
\emph{i)} We allow in general for a $\pm$ sign in equation \eqref{CommRel} because as we will see in the subsequent explicit construction we will arrive at commutation relations with an additional minus sign in front of the exchange phase factor. \\
\emph{ii)} In contrast to free fields for Bosons or Fermions the vector $\Phi \Omega$ will \emph{not} be a single particle state, but rather contain vectors with arbitrary high particle number. \\
\emph{iii)} Property iv) shows that the commutation relations will not be governed by a simple two-valued sign-function of the localization points, but depend on the winding number of two disjoint regions which makes sense because we work on the universal covering space $\widetilde{S_1}$ of the circle. \\

\section{Representation of the Rotations}
Transforming nontrivially under multiples of $2\pi$-rotations is one of the most important features of Anyons. This is why we have chosen to construct Anyon fields on the circle, because this one-dimensional model will be simple enough to be explicitly constructed by well-known techniques but is still capable of showing this non-trivial rotational behavior. \\
General considerations in algebraic quantum field theory show that the spins $S_q$ also have to satisfy the condition $S_{-q} = S_q$, which simply means that particles have the same spin as their corresponding anti-particles \cite{FroeMar91,Mund09}. Moreover, using the fact that the vacuum sector $\mH_0$ should be invariant under $2\pi$-rotaions, i.e. $S_0=0$, one can prove the following lemma.
\begin{theorem}
Assume we have a net of algebras according to section \ref{aim} and that the spins $S_q$ of the charged sectors $\mH_q$ for the representation $U$ of $\widetilde{U(1)}$ \eqref{2PiRep} additionally satisfy $S_{-q}=S_q$ and $S_0=0$. Then the $S_q$ are quadratic in $q$, i.e.
\bneq
S_q = s q^2,
\eneq
where $s\in\R$ is simply called the spin of the model.
\end{theorem}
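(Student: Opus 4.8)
The plan is to convert covariance under $U$ into a recursion for the spins $S_q$, solve that recursion, and then remove the remaining freedom using the two normalisation hypotheses $S_0=0$ and $S_{-q}=S_q$.

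First I would record a purely kinematical identity. For a basic field $\Phi$ localised in $\tI$, put $\Phi^{(2\pi)}:=U(2\pi)\,\Phi\,U(2\pi)^{*}$; by property iii) this is again a basic field, localised in $\tI+2\pi$, and $\tI+2\pi$ has the \emph{same} projection onto $S_1$ as $\tI$. If $v\in\mH_q$, then $U(2\pi)^{*}v=e^{-2\pi i S_q}v$ by \eqref{2PiRep} while $U(2\pi)\Phi v=e^{2\pi i S_{q+1}}\Phi v$ because $\Phi v\in\mH_{q+1}$, so
\beq
\Phi^{(2\pi)}v=e^{-2\pi i S_q}U(2\pi)\Phi v=e^{2\pi i (S_{q+1}-S_q)}\,\Phi v\qquad\text{for every }v\in\mH_q .
\eeq
In particular, with $v=\Omega$ and $S_0=0$, $\Phi^{(2\pi)}\Omega=e^{2\pi i S_1}\Phi\Omega$. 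The same argument applied to the charge-lowering field $\Phi^{*}$ (which by property i) maps $\mH_q$ into $\mH_{q-1}$) gives $(\Phi^{*})^{(2\pi)}v=e^{2\pi i(S_{q-1}-S_q)}\Phi^{*}v$ on $\mH_q$ and $(\Phi^{*})^{(2\pi)}\Omega=e^{2\pi i S_{-1}}\Phi^{*}\Omega$.

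Next I would evaluate $\Phi^{(2\pi)}$ against a concrete charge-$q$ vector. Fix $\tI$ and let $\Psi_q=\Phi_q\cdots\Phi_1$ be a product of basic fields localised in intervals $\tI_1,\dots,\tI_q$ whose projections onto $S_1$ are pairwise disjoint and disjoint from that of $\tI$; I would assume, as the Reeh--Schlieder / cyclicity properties of such a net guarantee, that $\Psi_q\Omega$ and $\Phi\Psi_q\Omega$ are nonzero. Since $\tI+2\pi$ projects onto the same interval as $\tI$, property iv) may be used to move $\Phi^{(2\pi)}$ to the right through $\Psi_q$, and likewise to move $\Phi$ (localised in $\tI$) through $\Psi_q$. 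In both commutations the signs of \eqref{CommRel} and the winding numbers $N(\tI,\tI_i)$ enter identically; the only difference is that each crossing for $\Phi^{(2\pi)}$ carries the extra factor produced by $N(\tI+2\pi,\tI_i)=N(\tI,\tI_i)+1$, namely $e^{2\pi i s\cdot 2}$. Using $\Phi^{(2\pi)}\Omega=e^{2\pi i S_1}\Phi\Omega$ and then commuting $\Psi_q$ back to the left, everything except these monodromy factors cancels, leaving
\beq
\Phi^{(2\pi)}\Psi_q\Omega=e^{2\pi i(2sq+S_1)}\,\Phi\Psi_q\Omega ,
\eeq
and, by the corresponding computation with $\Phi^{*}$ (whose commutation relations with basic fields follow from \eqref{CommRel} by taking adjoints and using charge conjugation, as required by twisted locality of the net), $(\Phi^{*})^{(2\pi)}\Psi_q\Omega=e^{2\pi i(-2sq+S_{-1})}\Phi^{*}\Psi_q\Omega$ for $q\ge 1$.

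Comparing these two formulae with the kinematical identity of the first step (and cancelling the nonzero vectors) yields, modulo $1$, the recursions $S_{q+1}-S_q=2sq+S_1$ for $q\ge0$ and $S_q-S_{q-1}=2sq-S_{-1}$ for $q\ge1$. Summing the first with $S_0=0$ gives $S_q\equiv s\,q^2+(S_1-s)\,q$ for $q\ge0$, whose second difference equals $2s$; on the other hand, subtracting the two recursions shows the second difference $S_{q+1}-2S_q+S_{q-1}$ equals $S_1+S_{-1}$, which with $S_{-1}=S_1$ is $2S_1$. Hence $2(S_1-s)\in\Z$. Because $q^2\equiv q\pmod 2$, the linear term then satisfies $(S_1-s)\,q\equiv(S_1-s)\,q^2\pmod 1$, so $S_q\equiv S_1\,q^2$ for $q\ge0$, and the hypothesis $S_{-q}=S_q$ extends this to all $q\in\Z$. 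This is the claim, with $s:=S_1$; note that $S_1$ differs from the parameter of \eqref{CommRel} by $0$ or $\tfrac12$, in accordance with Remark i).

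The step I expect to be the real obstacle is the second one: commuting a $2\pi$-shifted basic field through a product of $q$ basic fields while correctly tracking every sign $\pm$ and every winding number $N(\tI,\tI_i)$, and checking that everything except the monodromy factor $e^{\pm 4\pi i sq}$ cancels. The accompanying technicalities — verifying that $\Psi_q\Omega$, $\Phi\Psi_q\Omega$ and $\Phi^{*}\Psi_q\Omega$ are nonvanishing, and pinning down the exact commutation relations the charge-lowering fields $\Phi^{*}$ must obey for twisted locality — are routine but genuinely needed, since it is precisely the $\Phi^{*}$-computation (equivalently, the value of $S_{-1}$) that excludes a spurious linear term in $S_q$.
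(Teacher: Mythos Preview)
Your overall strategy --- compare $U(2\pi)\Phi U(2\pi)^{*}$ with $\Phi$ using twisted locality and the shift $N(\tI+2\pi,\cdot)=N(\tI,\cdot)+1$ --- is the paper's. The implementation, however, is much more roundabout, and the complications you flag (nonvanishing of $\Psi_q\Omega$, the $\Phi^{*}\Phi$ commutation law) are artefacts of that detour rather than genuine features of the problem.

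The paper never commutes a field through a product of $q$ fields or evaluates anything on the vacuum. It writes your kinematical identity as the \emph{operator} equation
\[
U(2\pi)\,\Phi_1\,U(2\pi)^{*}=\Phi_1\,e^{2\pi i(S_{Q+1}-S_Q)},\qquad S_Q:=\sum_q S_qP_q,
\]
and multiplies by a single further basic field $\Phi_2$ localised in a disjoint interval. The diagonal operator $e^{2\pi i(S_{Q+1}-S_Q)}$ can be pushed past $\Phi_2$ at the cost of $Q\mapsto Q+1$, after which the ordinary $\Phi_1\Phi_2$ relation \eqref{CommRel} applies. Comparing with the direct use of \eqref{CommRel} for $(U(2\pi)\Phi_1U(2\pi)^{*})\,\Phi_2$ --- which differs only by $N\mapsto N+1$, i.e.\ by the factor $e^{4\pi is}$ --- yields the operator identity $e^{2\pi i(S_{Q+2}-2S_{Q+1}+S_Q)}=e^{4\pi is}$, hence
\[
S_{q+2}-2S_{q+1}+S_q=2s\qquad\text{for all }q\in\Z
\]
in one stroke. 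The boundary data $S_0=0$, $S_{-q}=S_q$ then finish the proof.

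What this buys over your route: the second-difference equation comes out directly and for every $q$, so no separate $\Phi^{*}$ computation is needed to access negative charges or to kill the linear term. In your version the first recursion $S_{q+1}-S_q\equiv 2sq+S_1$ holds only for $q\ge 0$, and the second recursion relies on a $\Phi^{*}\Phi$ exchange relation that is \emph{not} among the assumptions of Section~\ref{aim}: property iv) gives only $\Phi\Phi$ relations, taking adjoints gives $\Phi^{*}\Phi^{*}$, and no charge-conjugation operator is postulated. So the step you rightly single out as the obstacle is in fact avoidable; working at the operator level with two fields makes it disappear.
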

\begin{proof}
Consider fields $\Phi_1, \Phi_2$ localized in $\tI_1, \tI_2$ with $I_1\cap I_2 = \emptyset$. Rotating the first field around $2\pi$ leads to the commutation relation
\beq
\begin{split}
U(2\pi)\Phi_1 U(2\pi)^*\ \Phi_2 &= \pm e^{2\pi i s(2N(\tI_1+2\pi,\tI_2)+1)}\Phi_2 \ U(2\pi)\Phi_1 U(2\pi)^* \\
&= \pm e^{2\pi i s (2N(\tI_1,\tI_2)+1)}\Phi_2 \ U(2\pi)\Phi_1 U(2\pi)^*\ e^{4\pi i s}.
\end{split}
\eeq
Using the shorthand notation $S_Q := \sum S_q P_q$ and the fact that $2\pi$-rotations are represented according to $U(2\pi) = e^{2\pi i S_Q}$ we can write
\beq
U(2\pi)\Phi_1 U(2\pi)^* = \Phi_1 e^{2\pi i (S_{Q+1}-S_Q)},
\eeq
which leads to the commutation relation
\beq
\begin{split}
U(2\pi)\Phi_1 U(2\pi)^*\ \Phi_2 &= \Phi_1 e^{2\pi i (S_{Q+1}-S_Q)}\ \Phi_2 \\
&= \pm e^{2\pi i s(2N(\tI_1,\tI_2)+1)} \Phi_2\ \Phi_1 e^{2\pi i (S_{Q+1}-S_Q)} \
e^{2\pi i (S_{Q+2}-2S_{Q+1}+S_Q)} \\
&= \pm e^{2\pi i s(2N(\tI_1,\tI_2)+1)} \Phi_2\ U(2\pi)\Phi_1 U(2\pi)^*\
e^{2\pi i (S_{Q+2}-2S_{Q+1}+S_Q)}.
\end{split}
\eeq
Comparing the two different expressions one concludes that $S_q$ has to satisfy
\beq
S_{q+2}-2 S_{q+1}+S_q = 2s , \hspace{1em} \forall q\in\Z.
\eeq
Solving this equation in combination with the conditions $S_0=0, S_{-q}=S_q$ then leads to the desired relation $S_q = s q^2$.
\end{proof}
So we conclude that, although it seems possible at first sight to chose arbitrary spins $S_q$ for every charged sector $\mH_q$, the dependence on the charge $q$ is already determined by the form of the commutation relations \eqref{CommRel}. In addition one gets a kind of spin-statistics relation saying that the real parameter $s$ in $U(2\pi) = e^{2\pi i s Q^2}$ is the same as in the exchange phase $e^{2\pi i s(2N+1)}$ which determines the commutation relations. \\
It therefore suggests itself to define the representation of $\widetilde{U(1)}$ on the full Hilbert space according to
\bneq \label{RotRep}
U(\o) := e^{i s \o Q^2} U_0(\o),
\eneq
where $Q$ is the charge operator, $Q \mH_q = q \mH_q$, and $U_0$ is some ($2\pi$-periodic) representation of $U(1)$, defined naturally for real $\o$ as $U_0(\o) = U_0(\o (\operatorname{mod}2\pi))$. \\
\sloppy{Our field algebra will be constructed on the charged (anti-symmetric) Fock space, i.e. on \mbox{$\mathcal{F}(\mH_1) = \mathcal{F}(\mH_1^+ \oplus \mH_1^-) \simeq \mathcal{F}(\mH_1^+)\otimes\mathcal{F}(\mH_1^-)$} with the usual charge operator and corresponding decomposition into charged sectors.} In this case the above considerations also show that the representation \eqref{RotRep} is \emph{not} the second quantization of some one-particle representation, where the phase would be linear in $q$, and we will have to take care of this peculiarity in our construction of the fields. \\

\section{Construction of the fields}
\subsection{Basic Idea} \label{BasicIdea}
The idea is to first construct auxiliary fields $\hat{\Phi}$ on the circle $S_1 \simeq [0,2\pi)$ by second quantization of specific one-particle operators, which are covariant under the $2\pi$-periodic representation $U_0$ and thus still have the ``wrong'' commutation relations. They are then lifted to the covering space $\widetilde{S_1}$ by using the full representation $U$ of the rotations. More precisely we define for $\o\in\R$,
\bneq
\begin{split}
\Phi_{\o} &:= e^{is\o Q^2}\, \hat{\Phi}_{\o}\, e^{-is\o Q^2} = e^{is\o(2Q-1)} \hat{\Phi}_{\o}, \\
\hat{\Phi}_{\o} &:= U_0(\o) \hat{\Phi}_0 U_0(\o),
\end{split}
\eneq
where $\hat{\Phi}_0$ is a field localized in an interval around the (arbitrary) reference direction $x=0$. By using the representation $U_0$ we get auxiliary fields $\hat{\Phi}_{\o}$ localized in intervals around $x=\hat{\o}$, where $\hat{\o}$ denotes the projection of $\o\in\R$ onto the interval $[0,2\pi)$ (Note that because we chose our standard interval not to be symmetric around $0$, $\hat{\o}$ satisfies $(\widehat{-\o})= 2\pi - \hat{\o}$). \\
One can see that the difference between $\hat{\Phi}_{\o}$ and the final field $\Phi_{\o}$ is just the operator $e^{i\o s(2Q-1)}$ which will lead to an additional phase factor in the commutation relations of the $\Phi$'s.
The important thing is that these auxiliary fields $\hat{\Phi}$ will be defined in such a way that for non-intersecting localization intervals they satisfy commutation relations of the form
\bneq \label{AuxCommRel}
\hat{\Phi}_{\o_1} \hat{\Phi}_{\o_2} = e^{-2is[(\widehat{\o_1-\o_2})-\pi] \pm i\pi}\, \hat{\Phi}_{\o_2} \hat{\Phi}_{\o_1},
\eneq
which still depend explicitly on the relative distance of the respective localization regions. \\
As already stated in section \ref{aim} our basic fields will raise the charge of a vector by one and thus satisfy $\hat{\Phi} Q = (Q-1) \hat{\Phi}$. Using this relations and the definition of the fields $\Phi_{\o}$ one can calculate that their commutation relations turn out to be
\bneq
\Phi_{\o_1} \Phi_{\o_2} = -e^{2is[(\o_1-\o_2)-(\widehat{\o_1-\o_2})+\pi]}\ \Phi_{\o_2} \Phi_{\o_1}.
\eneq
Now remember that $\hat{\o}$ has been defined as $\o\pmod {2\pi}$, which means that there exists an integer $n(\o)\in\Z$ such that
\bneq
\o = \hat{\o} + 2\pi n(\o).
\eneq
We call this $n(\o)$ the \emph{``winding number''} of $\o$ and using its definition we can rewrite the commutation relations according to
\bneq
\Phi_{\o_1} \Phi_{\o_2} = -e^{2\pi is[2n(\o_1-\o_2)+1]}\ \Phi_{\o_2} \Phi_{\o_1}.
\eneq
For intervals $\tI_1, \tI_2 \in \tilde{S_1}$ such that $I_1 \cap I_2 = \emptyset$ the number $n(\o_1-\o_2)$ is constant for all $\o_1\in\tI_1, \o_2\in\tI_2$ allowing us to define a relative winding number of $\tI_1$ w.r.t. $\tI_2$ according to
\bneq \label{IntervalWindingNumber}
N(\tI_1,\tI_2) := n(\o_1-\o_2), \mbox{ for } \o_1\in\tI_1, \o_2\in\tI_2.
\eneq
Hence for fields localized in non-intersecting intervals we get exactly the desired commutation relations \eqref{CommRel}. \\
The interesting question that remains is how we can construct the auxiliary fields $\hat{\Phi}_{\o}$ such that they satisfy the right commutation relations \eqref{AuxCommRel}. In the following we will describe in more detail the explicit construction of these fields by using implementers of certain Bogoliubov transformations.
\\
\subsection{Preliminaries}
Consider the Hilbert space $\mH_1 = L^2(S_1)$ which can be seen as a kind of auxiliary one-particle space. Fourier transformation leads to the equivalence
\bneq
\mH_1 \simeq l^2(\Z) \simeq l^2(\mathds{N}_0)\oplus l^2(\mathds{N}) =: \mH_1^+\oplus \mH_1^-,
\eneq
where w.l.o.g. we have chosen to include the zero-mode into the first summand $\mH_1^+$ ($\mathds{N}_0$ denotes the non-negative integers and $\mathds{N} = \mathds{N}_0\setminus\{0\}$).
On this Hilbert space one has the usual representation of $SO(2)\simeq U(1)$, namely
\bneq
(U_1(\o)\ph)(x) := \ph(x-w),
\eneq
which is diagonal in Fourier space, i.e.
\bneq
(U_1(\o)\tilde{\ph})_n = e^{-in\o}\tilde{\ph}_n.
\eneq
The Fourier modes $\tilde{\ph}_n$ and the inverse transformation are defined according to
\bneq
\tilde{\ph}_n := \frac{1}{\sqrt{2\pi}} \int_0^{2\pi} dx \ph(x) e^{-i x n}, \hspace{1em} \ph(x) = \frac{1}{\sqrt{2\pi}} \sum_{n\in\Z} \tilde{\ph}_n e^{i n x},
\eneq
and to avoid complicating the notation we do not make a notational distinction between the representation $U_1$ in $x$-space and in momentum-space. The components $\ph^\pm$ are then defined acording to
\bneq
\ph^+(x) = \frac{1}{\sqrt{2\pi}} \sum_{n\geq 0} \tilde{\ph}_n e^{inx} , \hspace{1em} \ph^-(x) = \frac{1}{\sqrt{2\pi}} \sum_{n< 0} \tilde{\ph}_n e^{inx}.
\eneq
\sloppy{Over this one-particle space one can now take the anti-symmetrized Fock space \mbox{$\mathcal{F}_a(\mH_1) = \mathcal{F}_a(\mH_1^+\oplus \mH_1^-) \simeq \mathcal{F}_a(\mH_1^+)\otimes\mathcal{F}_a(\mH_1^-)$} and for a unitary operator $U = U_{++}\oplus U_{--}$ on $\mH_1$, which is diagonal w.r.t. the decomposition $\mH_1^+\oplus\mH_1^-$, one defines its second quantization in the usual way},
\bneq
\hG(U) := \Gamma_+(U_{++})\otimes\Gamma_-(\overline{U_{--}}) = \bigoplus_{n=0}^\infty (U_{++})^{\wedge n} \otimes \bigoplus_{n=0}^\infty (\overline{U_{--}})^{\wedge n},
\eneq
where $(\cdot)^{\wedge n}$ stands for the $n$-fold anti-symmetric tensor product. In particular it follows that a constant phase factor $e^{i\gamma}$ is implemented as $\hG(e^{i\gamma}) = e^{i\gamma Q}$, where $Q$ is again the charge operator.\\
As is well known one also has two kinds of creation and annihilation operators $a,a^*$ and $b,b^*$ on this Fock space which can be used to define the free field according to
\bneq
\phi(\ph) := a^*(\ph^+) + b(\overline{\ph^-}) , \hspace{1em} \phi^\dagger(\ph) = b^*(\ph^-)+a(\overline{\ph^+}) , \hspace{2em} \mbox{ for } \ph = \binom{\ph^+}{\ph^-} \in \mH_1
\eneq
W.r.t. this field the second quantization of a unitary diagonal one-particle operator $U$ satisfies
\bneq \label{Implementer}
\hG(U) \phi(\ph) \hG(U)^* = \phi(U\ph)
\eneq
We therefore call the unitary operator $\hG(U)$ the implementer of $U$. For unitary operators $V$ which are \emph{not} diagonal an implementer exists if and only if it satisfies the \emph{Shale-Stinespring criterion}, which demands that the off-diagonal terms $V_{+-}$ and $V_{-+}$ are Hilbert-Schmidt operators. It is clear from equation \eqref{Implementer} that they are only defined up to a constant phase factor whose choice, however, does not alter the commutation relations between implementers. \\
Such implementable fermionic Bogoliubov transformations form a group which we will call $\mathcal{G}_F(\mH_1)$. This group can be decomposed into disconnected components labelled by the Fredholm index of $V_{--}$, i.e.
\bneq
q(V) := \dim \ker (V_{--}) - \dim \ker (V_{--}^*).
\eneq
More importantly it can be shown that the implementer $\hG(V)$ shifts the charge of a vector exactly by the number $q(V)$ (See e.g. \cite{CarRuij87, Ruij77, CarHur85} for a detailed introduction to implementable Bogoliubov transformations in one dimension). Our goal will be to construct fields raising or lowering the charge by one so we are only interested in operators with $q=1$, i.e.
\bneq \label{1DimKer}
\ker(V_{--}) = \{\l e_- \mid \l\in\mathds{C}\}, \hspace{1em} \ker(V_{--}^*) = \emptyset,
\eneq
where $e_-$ is a unit vector in $\mH_1^-$ and we set $e_0 := V e_- \in \mH_1^+$. For such unitaries the explicit form of the implementer turns out to be \cite{CarRuij87}
\bneq \label{ImplementerDef}
\hG(V) = N_V \left[a^*(e_0)\, E_c(Z) + E_c(Z)\, b(\overline{e_-})\right],
\eneq
where $N_V$ is a normalization constant and $E_c(Z)$ is defined according to
\bneq
E_c(Z) := \ :\exp\left[Z_{+-}a^*b^* + (Z_{++}-P_+)a^*a - (Z_{--}-P_-)b b^* - Z_{-+}ba\right]:,
\eneq
which can also be written as
\bneq
E_c(Z) = \exp(Z_{+-}a^*b^*)\Gamma_+(Z_{++})\Gamma(Z_{--}^T)\exp(-Z_{-+}ba).
\eneq
Here $:(...):$ denotes normal ordering and expressions of the form $A a^* b$ are short for
\bneq
A a^* b := \int dpdq\, A(p,q) a^*(p) b(q),
\eneq
where $A(p,q)$ is the integral kernel of the operator $A$ (assuming it exists). The operator $Z$ in the definition \eqref{ImplementerDef} is called the conjugate of $V$  and it depends on $V$ in the following way:
\bneq
\begin{split}
&Z_{++} := -(V_{++}^*)^{-1} ,\ \hspace{1em} Z_{+-} := -(V_{++}^*)^{-1}\ V_{-+}^*, \\
&Z_{-+}:= -V_{--}^{-1}\ V_{-+} , \hspace{1em} Z_{--} := - V_{--}^{-1}.
\end{split}
\eneq
Furthermore, given a bounded self-adjoint operator $A$, the unitary operator $e^{itA}$ for $t\in\R$ is in $\mathcal{G}_F(\mH_1)$ if and only if the off-diagonal elements $A_{+-},A_{-+}$ are Hilbert Schmidt. Such self-adjoint operators form the Lie algebra $\mathfrak{g}_F(\mH_1)$ of $\mathcal{G}_F(\mH_1)$ and due to continuity in $t$ the operators $e^{itA}$ then have vanishing Fredholm index $q(e^{itA})=0$ and therefore their implementers leave the charged sectors invariant. Moreover, there exists a hermitian operator $d\hG(A)$ on the Fock space such that the implementer can be written as $\hG(e^{itA}) = e^{itd\hG(A)}$. Explicit formulas for the implementers in the various cases and their domains of definition can be found e.g. in \cite{Ruij77,Ruij78,CarRuij87} but here we will only need the following properties whose proofs can be found also in \cite{CarRuij87,CarHur85}:
\begin{lemma} \label{ImplementerRelations} ${}$
\begin{enumerate}
    \item Unitaries $V\in\mathcal{G}_F(\mH_1)$ of the form \eqref{1DimKer} with $V_{-+}=0$ create one-particle vectors from the vacuum, more precisely
        \bneq
        \hG(V)\Omega = e_0, \hspace{1em} \mbox{ with } e_0 = V e_-.
        \eneq
    \item For a unitary charge shift $V \in \mathcal{G}_F(\mH_1)$ and self-adjoint operators $A, B \in \mathfrak{g}_F(\mH_1)$, such that $[A,B]=[V,A]=[V,B]=0$ the following commutation relations hold on the Fock space,
        \bneq
        \begin{split}
        \hG(e^{iA}) \hG(V) &= e^{i\langle\hG(V)\Omega,d\hG(A)\hG(V)\Omega\rangle}\ \hG(V) \hG(e^{iA}), \\
        \hG(e^{iA}) \hG(e^{iB}) &= e^{iS(A,B)}\ \hG(e^{iB}) \hG(e^{iA}),
        \end{split}
        \eneq
        where $S(A,B)$ is the so-called ``Schwinger term'' defined according to
        \bneq
        S(A,B) := i\, Tr(A_{-+}B_{+-}-B_{-+}A_{+-}).
        \eneq
    \item The implementers are covariant w.r.t. the second quantization $\hG(U_1(\o)) =: U_0(\o)$ of the (diagonal) one-particle rotations, i.e.
        \bneq
        \hG(U_1(\o))\, \hG(V)\, \hG(U_1(\o))^* = \hG(U_1(\o)V U_1(\o)^*).
        \eneq
\end{enumerate}
\end{lemma}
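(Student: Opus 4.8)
The plan is to establish the three items separately, all of them resting on the explicit form \eqref{ImplementerDef} of the implementer together with elementary manipulations in the CAR algebra and on the fact that $\hG$ is a homomorphism only up to a phase. \emph{For item i)}, I start from \eqref{ImplementerDef}: if $V_{-+}=0$, then the conjugate $Z$ has $Z_{+-}=Z_{-+}=0$, so $E_c(Z)=\Gamma_+(Z_{++})\Gamma(Z_{--}^T)$ leaves the Fock vacuum invariant; since $b(\overline{e_-})\Omega=0$ and $a^*(e_0)\Omega=e_0$, the formula collapses to $\hG(V)\Omega=N_V e_0$. Unitarity of $V$ and of $\hG(V)$ forces $\|e_0\|=\|\Omega\|=1$, hence $|N_V|=1$, and one uses the residual phase freedom of the implementer to normalize $N_V=1$.

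\emph{For item iii)}, the point is that the one-particle rotations $U_1(\o)$ are diagonal in Fourier space, so their second quantization carries no Shale--Stinespring anomaly and $\hG(U_1(\o))=U_0(\o)$ is a genuine $*$-homomorphism on the diagonal unitaries with $U_0(\o)\phi(\ph)U_0(\o)^*=\phi(U_1(\o)\ph)$. Substituting this into $\phi\bigl(U_1(\o)VU_1(\o)^*\ph\bigr)$ and peeling off the outer rotations with \eqref{Implementer} shows that $U_0(\o)\hG(V)U_0(\o)^*$ implements $U_1(\o)VU_1(\o)^*$, hence equals $\hG(U_1(\o)VU_1(\o)^*)$ up to a phase. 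I would then check that this phase is trivial by observing that conjugation by $U_0(\o)$ sends each building block of \eqref{ImplementerDef} — namely $a^*(e_0)\mapsto a^*(U_1(\o)e_0)$, $E_c(Z)$ with $Z\mapsto U_1(\o)ZU_1(\o)^*$ (which is exactly the conjugate of the rotated $V$), and the constant $N_V$, which is invariant under unitary conjugation of all blocks — precisely onto the building blocks of $\hG(U_1(\o)VU_1(\o)^*)$.

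\emph{For item ii)}, write $\hG(e^{itA})=e^{itd\hG(A)}$ with $d\hG(A)$ the normal-ordered generator, so $\langle\Omega,d\hG(A)\Omega\rangle=0$. For the first relation: since $[V,A]=0$ we have $e^{itA}Ve^{-itA}=V$, so $\hG(e^{itA})\hG(V)\hG(e^{itA})^{-1}$ again implements $V$ and therefore equals $c(t)\hG(V)$ for a phase $c(t)$; with the consistent choice $\hG(e^{itA})=e^{itd\hG(A)}$ the family $c(t)$ is multiplicative, hence $c(t)=e^{i\nu t}$, and differentiating at $t=0$ yields $\hG(V)^*\,d\hG(A)\,\hG(V)=d\hG(A)+\nu$; taking the vacuum expectation value and using the normalization of $d\hG$ pins down $\nu=\langle\hG(V)\Omega,\,d\hG(A)\,\hG(V)\Omega\rangle$, which is the claim. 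For the second relation: expanding $d\hG(A)$ and $d\hG(B)$ in $a,a^*,b,b^*$ with kernels $A_{\pm\pm},B_{\pm\pm}$ and computing $[d\hG(A),d\hG(B)]$ with the CAR, the operator-valued part collapses to a term proportional to $d\hG([A,B])=0$, and the only surviving contribution is the c-number produced when the off-diagonal pieces are brought into normal order, which works out (up to the conventional sign) to $-iS(A,B)$ with $S(A,B)$ as in the statement; since this commutator is central, Baker--Campbell--Hausdorff gives $e^{id\hG(A)}e^{id\hG(B)}=e^{iS(A,B)}e^{id\hG(B)}e^{id\hG(A)}$.

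The step I expect to be the real obstacle is the anomaly computation in the second half of item ii): one has to track signs carefully, identify exactly which Wick contractions survive the normal ordering, invoke the Hilbert--Schmidt hypothesis on $A_{+-},A_{-+},B_{+-},B_{-+}$ to guarantee that the resulting trace converges, and confirm that the leftover is genuinely a scalar rather than an unbounded operator. Everything else reduces to bookkeeping with vacuum expectation values and with the phase ambiguity of $\hG$.
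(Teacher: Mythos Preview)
The paper does not actually prove this lemma: immediately before stating it, the author writes that ``here we will only need the following properties whose proofs can be found also in \cite{CarRuij87,CarHur85}'', and no argument is given in the text. So there is no in-paper proof to compare against; your proposal is being measured against the standard arguments in Carey--Ruijsenaars and Carey--Hurst.

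With that caveat, your sketch is correct and is essentially the textbook route taken in those references. Item i) is exactly right: $V_{-+}=0$ kills both $Z_{+-}$ and $Z_{-+}$ via the formulas for the conjugate, $E_c(Z)$ then reduces to $\Gamma_+(Z_{++})\Gamma(Z_{--}^T)$ which fixes $\Omega$, and the phase freedom absorbs $N_V$. Item iii) is likewise the standard argument: diagonal one-particle unitaries have no anomaly, so $U_0(\o)$ conjugates the building blocks of \eqref{ImplementerDef} into those of $\hG(U_1(\o)VU_1(\o)^*)$ with no leftover phase. For item ii), your two-step strategy --- phase ambiguity plus vacuum expectation value for the first line, and the normal-ordering computation of $[d\hG(A),d\hG(B)]$ followed by BCH for the second --- is precisely how it is done in \cite{CarRuij87}. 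The point you flag as the ``real obstacle'' (careful bookkeeping of Wick contractions and the Hilbert--Schmidt hypothesis ensuring trace-class) is indeed where the work lies, but there is no hidden subtlety beyond what you describe; the surviving c-number is exactly $-iS(A,B)$ with $S$ as defined, and the operator remainder is $d\hG([A,B])=0$.

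One small remark: in the first relation of item ii) you should note that multiplicativity of $c(t)$ requires a \emph{fixed} phase choice for $\hG(V)$ (which you have) together with the group property $\hG(e^{i(t+s)A})=\hG(e^{itA})\hG(e^{isA})$, which holds because $e^{itA}$ is a one-parameter group in the identity component of $\mathcal{G}_F(\mH_1)$ and the implementer can be chosen continuously there. You implicitly use this when writing $\hG(e^{itA})=e^{itd\hG(A)}$, so the argument is fine, but it is worth making explicit.
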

\emph{Remark: The commonly used definition of the ``Schwinger term'' would be $-i S(A,B)$ in our notation.}\\
An important requirement for being able to compute commutation relations on the Fock space is that the one-particle operators commute. To ensure this we will work only with unitary multiplication operators on $L^2(S_1)$, i.e. operators of the form $(A \ph)(x) = e^{i f(x)} \ph(x) \equiv (e^{if}\ph)(x)$, where we will use the same symbol for the operator on Hilbert space and the function with which it multiplies. For such operators the following lemma characterizes a large class of implementable Bogoliubov transformations \cite{CarHur85}.
\begin{lemma} \label{SmoothFctLemma}
For a smooth real-valued function $\a\in C^\infty(S_1,\R)$ the multiplication operator $(\a\ph)(x) := \a(x)\ph(x)$ has Hilbert Schmidt off-diagonal elements and therefore $e^{it\a}\in\mathcal{G}_F(\mH_1), \forall t\in\R$ and $q(e^{it\a}) = 0$.
\end{lemma}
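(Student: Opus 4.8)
The plan is to reduce the whole statement to the decay of the Fourier coefficients of $\a$. First I would pass to the Fourier basis $e_n(x)=e^{inx}/\sqrt{2\pi}$, $n\in\Z$, in which $U_1$ is already diagonal and in which the multiplication operator $\a$ has the matrix elements $\langle e_m,\a e_n\rangle=\hat\a_{m-n}$ depending only on $m-n$, where $\hat\a_k:=\frac{1}{2\pi}\int_0^{2\pi}\a(x)\,e^{-ikx}\,dx$. Under the splitting $\mH_1=\mH_1^+\oplus\mH_1^-$ into non-negative and negative modes, the off-diagonal corner $\a_{+-}\colon\mH_1^-\to\mH_1^+$ only involves $\hat\a_k$ with $k=m-n\geq 1$, and for each such $k$ there are exactly $k$ index pairs $(m,n)$ with $m\geq 0$, $n\leq -1$, $m-n=k$; hence
\beq
\|\a_{+-}\|_{HS}^2 \;=\; \sum_{k=1}^\infty k\,|\hat\a_k|^2 ,
\eeq
and, using $\hat\a_{-k}=\overline{\hat\a_k}$ since $\a$ is real-valued, $\|\a_{-+}\|_{HS}^2=\sum_{k\geq 1}k\,|\hat\a_k|^2$ as well.

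The next step is to invoke smoothness: integrating by parts shows that for every $N\in\mathds{N}$ there is a constant $C_N>0$ with $|\hat\a_k|\leq C_N(1+|k|)^{-N}$, so already the choice $N=2$ gives $\sum_{k\geq 1}k\,|\hat\a_k|^2\leq C_2^2\sum_{k\geq 1}k^{-3}<\infty$. Thus both off-diagonal corners of $\a$ are Hilbert--Schmidt. (In fact only $\a\in H^{1/2}(S_1)$ is needed, but this is immaterial here.) Since $\a$ is real-valued and continuous on the compact $S_1$, the operator $t\a$ is a bounded self-adjoint multiplication operator with Hilbert--Schmidt off-diagonal part for every $t\in\R$, i.e. $t\a\in\mathfrak{g}_F(\mH_1)$; by the criterion recalled just before the lemma this yields $e^{it\a}\in\mathcal{G}_F(\mH_1)$.

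It remains to see $q(e^{it\a})=0$. For this I would use that $t\mapsto e^{it\a}$ is norm-continuous (as $\a$ is bounded) and that the diagonal corner of a unitary with Hilbert--Schmidt off-diagonal part is Fredholm (from $V_{--}^*V_{--}=\mathds{1}-V_{+-}^*V_{+-}$ and its adjoint analogue), so that $t\mapsto(e^{it\a})_{--}$ is a norm-continuous path of Fredholm operators on $\mH_1^-$; it equals the identity (index $0$) at $t=0$, and the Fredholm index is locally constant, hence $q(e^{it\a})\equiv 0$ — this is exactly the continuity argument already indicated in the text. There is no serious obstacle; the only point demanding a little care is the multiplicity factor $k$ in $\sum_k k\,|\hat\a_k|^2$, which is precisely what forces the estimate to need slightly more than square-integrability of $\a$ and is comfortably supplied by smoothness.
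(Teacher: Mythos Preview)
Your proof is correct and follows essentially the same route as the paper: both pass to the Fourier basis, identify the off-diagonal matrix elements as $\tilde\a_{m-n}$, count the multiplicity $k$ to obtain $\|\a_{+-}\|_{HS}^2=\sum_{k\geq 1}k\,|\tilde\a_k|^2$, and then invoke smoothness for convergence. You simply supply more detail (the explicit integration-by-parts decay estimate and the Fredholm-index continuity argument for $q(e^{it\a})=0$) where the paper is content to state that the sum ``is fulfilled for smooth functions'' and to refer back to the continuity remark preceding the lemma.
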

\begin{proof}
The off-diagonal elements of $\a$ are by definition
\beq
(P_+\a P_-\ph)(x) \equiv (\a_{+-}\ph^-)(x) = \frac{1}{2\pi} \sum_{n\geq 0} \sum_{k<0} \tilde{\a}_{n-k}\tilde{\ph}_k\, e^{inx},
\eeq
so the Hilbert-Schmidt condition reads
\beq
\begin{split}
\operatorname{Tr}[(\a_{+-})^*\a_{+-}] = \operatorname{Tr}[\a_{-+}\a_{+-}] &\propto \sum_{n<0}\sum_{k\geq 0} \tilde{\a}_{n-k} \tilde{\a}_{k-n} = \sum_{n<0}\sum_{k\geq 0} |\tilde{\a}_{n-k}|^2 \\
&= \sum_{k\geq 0} \sum_{n<k} |\tilde{\a}_n|^2 = \sum_{n=1}^\infty n\, |\tilde{\a}_n|^2 < \infty
\end{split}
\eeq
which is fulfilled for smooth functions $\a$.
\end{proof}
Another advantage of using multiplication operators in position space is that a simple expression for the Schwinger term can be computed explicitly \cite{CarHur85}.
\begin{lemma} \label{SchwingerTerm}
Consider self-adjoint operators acting as multiplication with the smooth (real-valued) functions $\a,\beta$ on $L^2(S_1)$. Then the Schwinger term $S(\a,\beta)$ turns out to be
\bneq
S(\a,\beta) = \frac{1}{2\pi} \int_0^{2\pi} dx\ \a(x)\beta '(x)\ = \frac{1}{4\pi} \int_0^{2\pi} dx \left(\a(x)\beta '(x)-\a '(x)\beta(x)\right)
\eneq
\end{lemma}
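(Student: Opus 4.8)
The plan is to compute both sides of the claimed identity directly in the Fourier basis $\{e_n\}_{n\in\Z}$ of $L^2(S_1)$ and check that they coincide. First I would record the matrix elements of a real smooth multiplication operator $\gamma$: since multiplication by $\gamma$ acts as a convolution on Fourier coefficients, $\langle e_n,\gamma e_k\rangle = \frac{1}{\sqrt{2\pi}}\tilde\gamma_{n-k}$, exactly the formula already used in the proof of Lemma~\ref{SmoothFctLemma}. Restricting the indices yields the kernels of the off-diagonal blocks, $(\gamma_{+-})_{nk}=\frac{1}{\sqrt{2\pi}}\tilde\gamma_{n-k}$ for $n\geq 0,k<0$ and $(\gamma_{-+})_{nk}=\frac{1}{\sqrt{2\pi}}\tilde\gamma_{n-k}$ for $n<0,k\geq 0$. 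By Lemma~\ref{SmoothFctLemma} these blocks are Hilbert--Schmidt, so $\alpha_{-+}\beta_{+-}$ and $\beta_{-+}\alpha_{+-}$ are trace class and $S(\alpha,\beta)$ is well defined; rapid decay of $\tilde\alpha_n,\tilde\beta_n$ will make every sum below absolutely convergent.

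Next I would evaluate the trace as a double sum over the basis,
\[
\operatorname{Tr}(\alpha_{-+}\beta_{+-}) = \sum_{k<0}\sum_{n\geq 0}(\alpha_{-+})_{kn}(\beta_{+-})_{nk} = \frac{1}{2\pi}\sum_{k<0}\sum_{n\geq 0}\tilde\alpha_{k-n}\,\tilde\beta_{n-k}.
\]
Substituting $m=n-k$, the constraints $n\geq 0$, $k<0$ become $m\geq 1$, and for each fixed $m$ there are precisely $m$ admissible pairs (namely $n=0,1,\dots,m-1$), so $\operatorname{Tr}(\alpha_{-+}\beta_{+-}) = \frac{1}{2\pi}\sum_{m\geq 1} m\,\tilde\alpha_{-m}\tilde\beta_m$, and symmetrically for $\operatorname{Tr}(\beta_{-+}\alpha_{+-})$. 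Therefore
\[
S(\alpha,\beta) = i\operatorname{Tr}(\alpha_{-+}\beta_{+-}-\beta_{-+}\alpha_{+-}) = \frac{i}{2\pi}\sum_{m\geq 1} m\big(\tilde\alpha_{-m}\tilde\beta_m-\tilde\beta_{-m}\tilde\alpha_m\big).
\]

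Finally I would expand the right-hand side of the assertion: with $\beta'(x)=\frac{1}{\sqrt{2\pi}}\sum_k ik\,\tilde\beta_k e^{ikx}$ and $\frac{1}{2\pi}\int_0^{2\pi}e^{i(m+k)x}\,dx=\delta_{m,-k}$ one gets $\frac{1}{2\pi}\int_0^{2\pi}\alpha\beta'\,dx = \frac{i}{2\pi}\sum_{k}k\,\tilde\alpha_{-k}\tilde\beta_k$; the $k=0$ term vanishes, and splitting into $k\geq 1$ and $k\leq-1$ (relabelling $k\mapsto-m$ in the second sum) reproduces exactly the previous display, establishing the first equality. The second equality then follows from $2\pi$-periodicity of $\alpha\beta$: since $\int_0^{2\pi}(\alpha\beta'+\alpha'\beta)\,dx = \int_0^{2\pi}(\alpha\beta)'\,dx = 0$, we get $\frac{1}{2\pi}\int_0^{2\pi}\alpha\beta'\,dx = \frac{1}{4\pi}\int_0^{2\pi}(\alpha\beta'-\alpha'\beta)\,dx$.

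The only point requiring genuine care is the bookkeeping in the reindexing $m=n-k$ — in particular the multiplicity factor $m$, which is precisely what turns the bare convolution sum into the derivative pairing — together with keeping the per-factor $\frac{1}{\sqrt{2\pi}}$ normalizations straight and confirming trace-class-ness so that the (cyclic) trace is legitimate; the rest is routine.
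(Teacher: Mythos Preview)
Your proof is correct and follows essentially the same route as the paper: compute the trace in the Fourier basis, reindex to extract the multiplicity factor $m$, and identify the result with the Fourier expansion of $\frac{1}{2\pi}\int_0^{2\pi}\alpha\beta'\,dx$. The only cosmetic difference is direction---the paper passes from the mode sum back to the integral via inverse Fourier transform, whereas you expand the integral in modes and compare---but the content is identical.
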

\begin{proof}
Similar to the proof of lemma \ref{SmoothFctLemma} one calculates
\beq
\begin{split}
i Tr(&\a_{-+}\beta_{+-}-\beta_{-+}\a_{+-}) = \frac{i}{2\pi} \sum_{n<0} \sum_{l\geq 0} \left(\tilde{\a}_{n-l}\,\tilde{\beta}_{l-n} - \tilde{\beta}_{n-l}\,\tilde{\a}_{l-n}\right) \\
&= \frac{i}{2\pi}\sum_{l\geq 0} \sum_{n<-l} \left(\tilde{\a}_n \tilde{\beta}_{-n} - \tilde{\beta}_n \tilde{\a}_{-n}\right) = - \frac{i}{2\pi} \sum_{n=1}^\infty n \left(\tilde{\a}_n \tilde{\beta}_{-n} - \tilde{\beta}_n \tilde{\a}_{-n}\right) \\
&= -\frac{i}{2\pi}\sum_{n\in\Z} n\, \tilde{\a}_n \tilde{\beta}_{-n} = -\frac{1}{2\pi}\sum_{n\in\Z} \widetilde{(\a ')}_n \tilde{\beta}_{-n}.
\end{split}
\eeq
Inverse Fourier transform then leads to the simple expression in position space
\beq
\begin{split}
-\frac{1}{2\pi} \sum_{n\in\Z} \widetilde{(\a ')}_n\tilde{\beta}_{-n} &= -\frac{1}{2\pi}\int_0^{2\pi} dx dy\, \a '(x)\beta(y) \frac{1}{2\pi}\sum_{n\in\Z} e^{-i n(x-y)} \\
&= -\frac{1}{2\pi} \int_0^{2\pi} dx\, \a '(x)\beta(x) = \frac{1}{4\pi}\int_0^{2\pi} dx \left(\a(x)\beta '(x) - \a '(x)\beta(x)\right),
\end{split}
\eeq
where potential boundary terms cancel because of the continuity of $\a$ and $\beta$.
\end{proof}
\vspace{1em}

\subsection{Construction of the Auxiliary Field}
Apart from exponentials of smooth multiplication operators we also need unitaries with non-vanishing Fredholm index. We therefore consider first of all the operator
\bneq
(V\ph)(x) := e^{ix}\ph(x),
\eneq
which is obviously unitary, but the function $[0,2\pi) \ni x \mapsto x$ is \emph{not} smooth on $S_1$ so Lemma \ref{SmoothFctLemma} is not applicable here. However we still have the following result.
\begin{theorem}
The unitary multiplication operator $(V\ph)(x) = e^{ix}\ph(x)$ has the following properties:
\begin{enumerate}
    \item Its off-diagonal elements $V_{\pm\mp}$ are Hilbert-Schmidt.
    \item The diagonal elements satisfy
    \beq
    \ker V_{--} = \{\l e_- |\l\in\R\}, \hspace{1em} \ker V_{--}^* = \ker V_{++} = \emptyset,
    \eeq
    with $e_-(x) = \frac{1}{\sqrt{2\pi}} e^{-ix}$ and $(Ve_-)(x) = e_0(x) = \frac{1}{\sqrt{2\pi}}$.
\end{enumerate}
\end{theorem}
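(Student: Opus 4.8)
The plan is to pass to Fourier space, where multiplication by $e^{ix}$ becomes the bilateral shift on $l^2(\Z) \simeq \mH_1$, and then read off all the assertions from its action on the orthonormal basis $e_n(x) := \frac{1}{\sqrt{2\pi}} e^{inx}$, $n \in \Z$. Since $V e_n = e_{n+1}$, on $l^2(\Z)$ the operator $V$ is the unitary $(V\psi)_n = \psi_{n-1}$ with adjoint $(V^*\psi)_n = \psi_{n+1}$. Recalling from the preliminaries that $\mH_1^+$ is spanned by $\{e_n : n \geq 0\}$ and $\mH_1^-$ by $\{e_n : n < 0\}$, the one fact that drives everything is that $V$ carries only the single basis vector $e_{-1}$ across the cut, namely $e_{-1} \mapsto e_0 \in \mH_1^+$, while every other $e_n$ is mapped within its own half-space; symmetrically, $V^*$ moves only $e_0 \mapsto e_{-1}$.

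For (i), I would note that $V_{-+} = P_- V P_+ = 0$, because $V$ raises the index and hence $V\mH_1^+ \subseteq \mH_1^+$, while $V_{+-} = P_+ V P_-$ reduces to the rank-one operator $\ph \mapsto \langle e_{-1}, \ph \rangle\, e_0$. A finite-rank operator is Hilbert--Schmidt, so both off-diagonal blocks are. (Equivalently one may run the trace estimate from the proof of Lemma \ref{SmoothFctLemma} with the smooth function $e^{ix}$, which has a single nonzero Fourier coefficient, so the relevant sum contains exactly one term.)

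For (ii), the block $V_{++} = P_+ V P_+$ equals the restriction of the shift to $\mH_1^+$ — nothing is projected away since $V\mH_1^+ \subseteq \mH_1^+$ — hence it is injective and $\ker V_{++} = \{0\}$. For $\ph \in \mH_1^-$ the vector $V\ph$ lies in the closed span of $\{e_n : n \leq 0\}$, so $V_{--}\ph = P_- V \ph$ vanishes precisely when $V\ph \in \mathds{C} e_0$, i.e. when $\ph \in \mathds{C} e_{-1}$; thus $\ker V_{--} = \mathds{C} e_-$ with $e_- = e_{-1}$, and $V e_- = e_0$, which also identifies $e_0(x) = \frac{1}{\sqrt{2\pi}}$. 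Finally $V_{--}^* = P_- V^* P_-$ is the restriction of the down-shift $V^*$ to $\mH_1^-$ — again nothing is projected away, since $V^*\mH_1^- \subseteq \mH_1^-$ — hence injective, so $\ker V_{--}^* = \{0\}$.

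There is no genuine obstacle here: the computation is elementary once $V$ is identified with the shift. The only point that demands care is the bookkeeping of which shifted basis vector ends up on the ``wrong'' side of the decomposition $\mH_1 = \mH_1^+ \oplus \mH_1^-$ — in particular the deliberately unsymmetric convention placing the zero mode in $\mH_1^+$, which is exactly what yields the one-dimensional kernel of $V_{--}$ and the Fredholm index $q(V) = \dim\ker V_{--} - \dim\ker V_{--}^* = 1$ needed for a charge-raising field.
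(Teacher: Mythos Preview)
Your proof is correct and follows essentially the same approach as the paper: both identify $V$ with the shift $e_n \mapsto e_{n+1}$ in the Fourier basis and read off the kernel statements and the vanishing/rank-one nature of the off-diagonal blocks directly from this. Your version is somewhat more explicit (in particular you spell out the argument for $\ker V_{--}^*$, which the paper leaves implicit), but the idea is identical.
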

\begin{proof}
Consider the basis $\{e_n \in \mH_1 | n\in\Z\}$ for $\mH_1 = L^2(S_1)$ where $e_n(x) := \frac{1}{\sqrt{2\pi}} e^{i n x}$. Then the operator $V$ acts as a \emph{shift operator} w.r.t. this basis, i.e. it satisfies
\beq
V e_n = e_{n+1}.
\eeq
Now because $\mH_1^+$ is spanned by $\{e_n|n\geq 0\}$ and $\mH_1^-$ by $\{e_n|n<0\}$ it is thus obvious that $e_- \equiv e_{-1}$ spans $\ker V_{--}$ and that $\ker V_{++}=\emptyset$. \\
This also immediately leads to $V_{-+} = 0$ and $Im V_{+-} = \{\l e_0|\l\in\R\}$ so $V_{\pm\mp}$ are evidently Hilbert-Schmidt operators.
\end{proof}
This shows that such a $V$ is a suitable operator to obtain a simple charge shift on the Fock space. One can also define rotated $V$'s according to
\bneq
V_\o = U_1(\o) V U_1(\o)^* = e^{-i\o} V.
\eneq
For the implementers this leads to
\bneq
U_0(\o) \hG(V) U_0(\o)^* = \hG(U_1(\o)V U_1(\o)^*) = \hG(e^{-i\o} V) = \hG(V)e^{-i\o Q}.
\eneq
(Defining it as $ e^{-i\o Q}\hG(V)$ would also be possible, which simply amounts to another choice of phase factor for the implementer.) This simple relation now allows us to compute the commutation relations of different charge shifting operators $\hG(V_{\o_1}), \hG(V_{\o_2})$, which could be a quite involved task in general. Using the charge shifting property of $V$, $V Q = (Q-1) V$, one immediately gets
\bneq
\hG(V_{\o_1}) \hG(V_{\o_2}) = e^{-(\o_1-\o_2)}\, \hG(V_{\o_2}) \hG(V_{\o_1}).
\eneq
These are not yet the right commutation relations we need for our auxiliary field so we consider in addition an operator of the form
\bneq
(e^{i\l\alpha}\ph)(x) := e^{i\l\alpha(x)}\ph(x),
\eneq
where $\alpha \in C^\infty(S_1,\R)$ is a smooth real-valued function on the circle and $\l$ is some real parameter. Lemma \ref{SmoothFctLemma} then tells us that this is an implementable unitary operator allowing us to define the auxiliary field $\hat{\Phi}$ according to
\bneq
\hat{\Phi}_\o := \hG(V_\o) \hG(e^{i\l\alpha_\o}) = U_0(\o)\,\hG(V)\hG(e^{i\l\alpha}) \,U_0(\o)^* ,
\eneq
where $\alpha_\o$ is again defined as $\alpha_\o(x) := \alpha(x-\o)$. \\ \\
The simple form of the operator $V$ and lemma \ref{ImplementerRelations} then allow us to compute the relative commutation relations between $V_{\o '}$ and $e^{i\l\alpha_{\o}}$.
\begin{lemma}
For all smooth real-valued functions $\a$ The commutation relations between $V_{\o '}$ and $e^{i\l\a_\o}$ are independent of $\o ',\o$ and are of the form
\bneq
\hG(e^{i\l\alpha_\o})\hG(V_{\o '}) = e^{i\l\, const.} \hG(V_{\o '}) \hG(e^{i\l\alpha_\o}),
\eneq
where \emph{const.} only depends on the integral over $\a$.
\end{lemma}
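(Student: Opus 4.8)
The plan is to reduce the asserted identity to the ``unrotated'' commutation relation between $\hG(e^{i\l\a})$ and $\hG(V)$ and then read off the constant from part ii) of Lemma~\ref{ImplementerRelations}. First I would conjugate the relation by $U_0(\o)^*$: since $\a_\o(x)=\a(x-\o)$ means $e^{i\l\a_\o}=U_1(\o)e^{i\l\a}U_1(\o)^*$, the covariance in part iii) of Lemma~\ref{ImplementerRelations} gives $U_0(\o)^*\hG(e^{i\l\a_\o})U_0(\o)=\hG(e^{i\l\a})$ and, likewise, $U_0(\o)^*\hG(V_{\o'})U_0(\o)=\hG(V_{\o'-\o})$. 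As conjugation by a unitary leaves a commutation relation with a scalar factor unchanged, it suffices to commute $\hG(e^{i\l\a})$ past $\hG(V_{\o'-\o})$. Using the identity $\hG(V_\mu)=\hG(V)\,e^{-i\mu Q}$ recorded just above, together with the fact that $\hG(e^{i\l\a})$ has vanishing Fredholm index by Lemma~\ref{SmoothFctLemma} and hence commutes with $Q$ (and with $e^{-i(\o'-\o)Q}$), the residual rotation drops out and the claim is reduced to the commutation relation of $\hG(e^{i\l\a})$ with $\hG(V)$ alone. This is already enough to see that the exchange factor does not depend on $\o$ or $\o'$, so only its value remains to be determined.

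For this I would invoke part ii) of Lemma~\ref{ImplementerRelations} with $A=\l\a$ and the charge shift $V$: the hypotheses hold because $\a\in C^\infty(S_1,\R)$ gives $\l\a\in\mathfrak{g}_F(\mH_1)$ by Lemma~\ref{SmoothFctLemma}, the preceding Proposition shows $V$ is a unitary charge shift of the required form, and $[V,\l\a]=0$ since both act as multiplication operators in position space. This produces
\beq
\hG(e^{i\l\a})\,\hG(V)=e^{\,i\langle\hG(V)\Omega,\ d\hG(\l\a)\,\hG(V)\Omega\rangle}\ \hG(V)\,\hG(e^{i\l\a}).
\eeq
By part i) of Lemma~\ref{ImplementerRelations}, applicable since $V_{-+}=0$, one has $\hG(V)\Omega=Ve_-=e_0$ with $e_0(x)=\tfrac{1}{\sqrt{2\pi}}$ the zero Fourier mode in $\mH_1^+$; using linearity of $d\hG$ and the fact that on a one-particle (particle) vector $d\hG$ of a diagonal self-adjoint operator acts as the corresponding one-particle operator, the exponent reduces to $\l\langle e_0,\a e_0\rangle=\tfrac{\l}{2\pi}\int_0^{2\pi}\a(x)\,dx$, which manifestly depends on $\a$ only through its integral. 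Reassembling the reduction then gives
\beq
\hG(e^{i\l\a_\o})\,\hG(V_{\o'})=e^{\,\frac{i\l}{2\pi}\int_0^{2\pi}\a(x)\,dx}\ \hG(V_{\o'})\,\hG(e^{i\l\a_\o}).
\eeq

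The step requiring the most care is the evaluation $\langle\hG(V)\Omega,\,d\hG(\l\a)\,\hG(V)\Omega\rangle=\l\langle e_0,\a e_0\rangle$. On the charged fermionic Fock space $d\hG(\l\a)$ splits into a particle part built from $\l\a_{++}$ and an antiparticle part built from $\overline{\l\a_{--}}$, with a normal-ordering constant fixed so that $\langle\Omega,d\hG(\l\a)\Omega\rangle=0$, and one has to check that on the one-particle state $e_0=a^*(e_0)\Omega$ only the particle part contributes, thereby recovering the naive one-particle expectation value. It is also worth noting that the usual phase ambiguity in the implementer $\hG(V)$ is harmless here: it cancels between the two copies of $\hG(V)\Omega$ in the inner product and, by the remark following~\eqref{Implementer}, does not affect the commutation relation at all. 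The remaining manipulations are routine given Lemmas~\ref{ImplementerRelations} and~\ref{SmoothFctLemma}.
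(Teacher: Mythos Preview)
Your proposal is correct and follows essentially the same line as the paper: both invoke part ii) of Lemma~\ref{ImplementerRelations} and identify the exponent as $\l\langle e_0,\a\, e_0\rangle=\tfrac{\l}{2\pi}\int_0^{2\pi}\a(x)\,dx$. The only difference is organizational: you first conjugate by $U_0(\o)$ and strip off the factor $e^{-i\mu Q}$ to reduce to the unrotated pair $(\hG(e^{i\l\a}),\hG(V))$, whereas the paper applies the lemma directly to the rotated operators and uses that $\hG(V_{\o'})\Omega=e_0$ is rotation invariant, so that $\langle e_0,\a_\o e_0\rangle=\langle e_0,\a\, e_0\rangle$ in one step.
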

\begin{proof}
Since the vector $e_0$, which is created from the vacuum by $\hG(V)$, is invariant under rotations we get
\beq
\langle\hG(V_{\o '})\Omega, d\hG(\alpha_{\o})\hG(V_{\o '})\Omega\rangle = \langle e_0,\alpha_\o e_0\rangle = \langle e_0,\alpha\, e_0\rangle = \mbox{const.}
\eeq
\end{proof}
These constant phase factors will therefore cancel out in the total commutation relations between the $\hat{\Phi}$'s which then turn out to be
\bneq
\hat{\Phi}_{\o_1} \hat{\Phi}_{\o_2} = e^{-i(\o_1-\o_2)}e^{i\l^2 S(\alpha_{\o_1},\alpha_{\o_2})}\, \hat{\Phi}_{\o_2} \hat{\Phi}_{\o_1}.
\eneq
Comparing this with the relations in section \ref{BasicIdea} we are now faced with the following problem: \vspace{1em} \\
\emph{Find a smooth real-valued function $\alpha$ on the circle and a parameter $\l\in\R$  for which the Schwinger term $S(\alpha_{\o_1},\alpha_{\o_2})$ satisfies
\bneq \label{SchwingerTermRelation}
\l^2 S(\alpha_{\o_1},\alpha_{\o_2}) -(\widehat{\o_1-\o_2}) = -2s[(\widehat{\o_1-\o_2})-\pi] \pm \pi,
\eneq
for suitable $(\widehat{\o_1-\o_2})$.}

\vspace{1em}
For this purpose consider first the function $x\mapsto \lambda \hat{x}$ on $\R$ for an abitrary real parameter $\l$. Similar to the shift operator $V$, multiplication with $e^{i\lambda \hat{x}}$ would lead to commutation relations with exchange phase of the form $e^{-i\lambda(\o_1-\o_2)}$. But the function $e^{i\l\hat{x}}$  is in general not continuous on the circle and therefore it does not lead to an implementable transformation. This problem can be solved by smearing it with a function $\chi_\e \in C_0^\infty(\R)$ with the following properties:
\begin{wrapfigure}[4]{r}[0pt]{0.3\textwidth}
     \vspace{-1em}\includegraphics[width=0.3\textwidth]{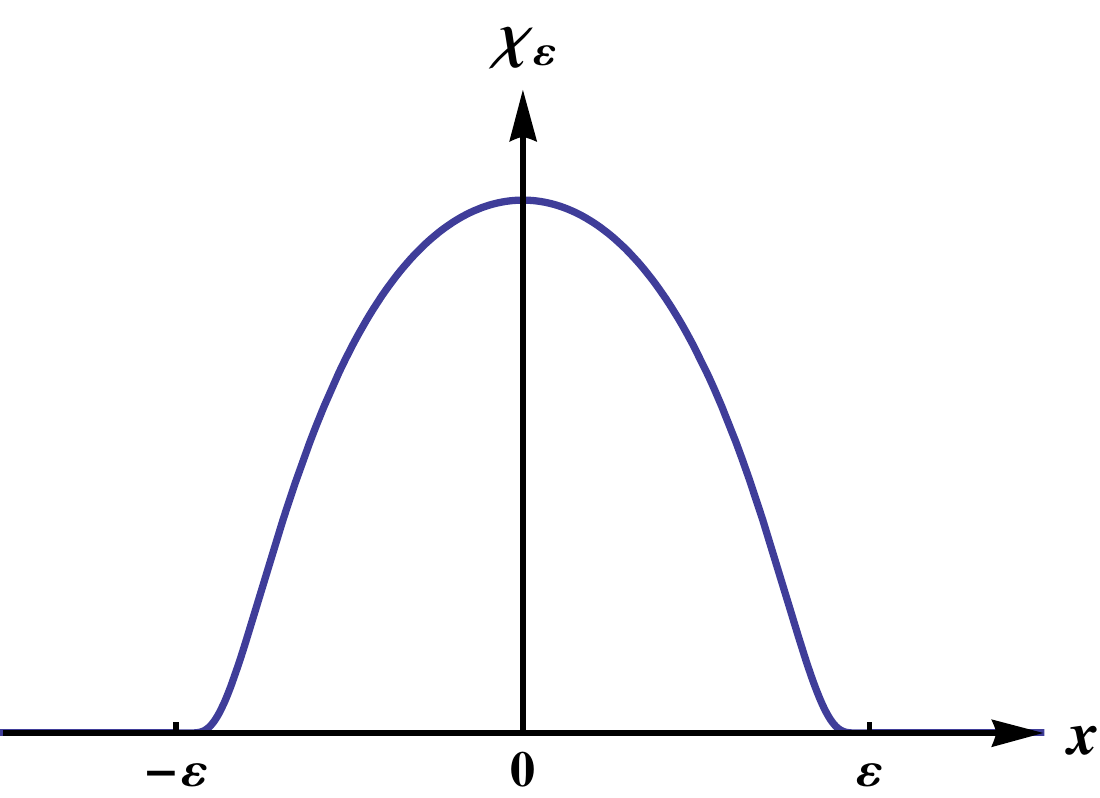}
\end{wrapfigure}
\begin{itemize}
    \item $\operatorname{supp}\chi_\e = [-\e,\e]$ with $0<\e<\frac{\pi}{2}$
    \item $\chi_\e(-x) = \chi_\e(x) = \overline{\chi_\e(x)}$
    \item $\int dx \chi_\e(x) = 1$
\end{itemize}
\vspace{2em}
Considering the $2\pi$-periodic function $\hat{x} \equiv x\pmod {2\pi}$ on $\R$ one can use such a $\chi_\e$ to define a smooth function according to \\
\vspace{-2em}\begin{wrapfigure}[8]{r}[0pt]{0.3\textwidth}
     \includegraphics[width=0.3\textwidth]{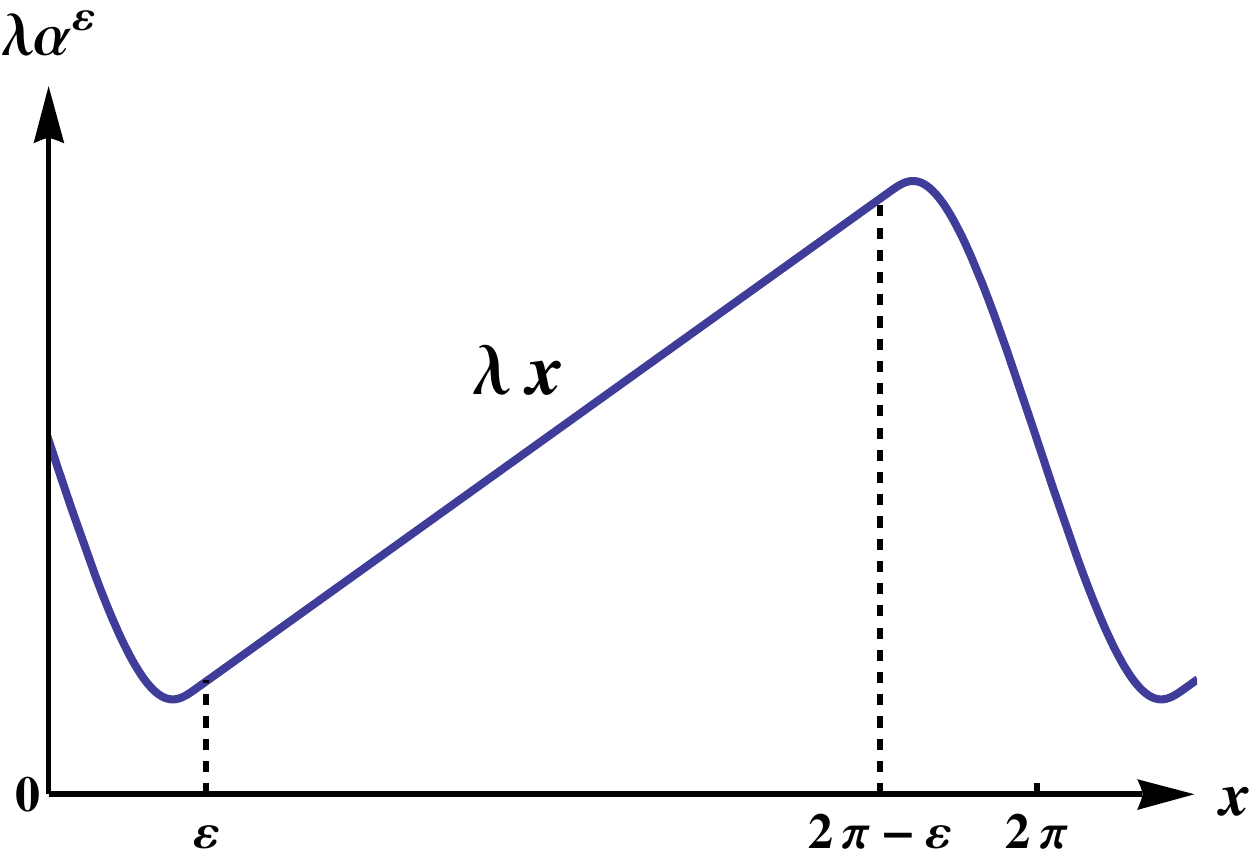}
\end{wrapfigure}
\vspace{1em}
\bneq \label{DefAlpha}
\a^\e(x) := \int_\R dy\, \widehat{(x-y)}\chi_\e(y),
\eneq \vspace{0em}

which is still $2\pi$-periodic and thus defines a smooth function on the circle. An important feature of this function is that it is still linear in the interval $(\e,2\pi-\e)$, because for our choice of a symmetric $\chi_\e$ there holds
\beq
\int dy\, (x-y)\chi_e(y) = x,
\eeq
which means that most of the linear part of $\hat{x}$ is unchanged and only the jump discontinuities get smeared. \\
For every real parameter $\l\in\R$ this $\a^\e$ then defines an implementable unitary multiplication operator according to
\bneq
(e^{i\l\a^\e}\ph)(x) = e^{i\l\a^\e(x)}\ph(x).
\eneq
It turns out that the Schwinger term for such operators has exactly the right form, namely we have the following proposition.
\begin{theorem}
For functions $\alpha^{\e_1}, \alpha^{\e_2}$ with $0 < \e_i < \frac{\pi}{2}, i= 1,2$ as defined in equation \eqref{DefAlpha} and $\o_1,\o_2 \in \R$ such that $\e_1+\e_2 < (\widehat{\o_1-\o_2}) < 2\pi - \e_1 - \e_2$ the Schwinger term satisfies
\bneq
S(\alpha_{\o_1}^{\e_1},\alpha_{\o_2}^{\e_2}) = (\widehat{\o_1-\o_2})-\pi.
\eneq
\end{theorem}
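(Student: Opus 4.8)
The plan is to evaluate $S(\alpha_{\o_1}^{\e_1},\alpha_{\o_2}^{\e_2})$ directly from the position-space formula of Lemma~\ref{SchwingerTerm}, $S(\a,\beta)=\tfrac{1}{2\pi}\int_0^{2\pi}\a(x)\beta'(x)\,dx$, after a preliminary reduction. Since each $\alpha_\o^\e$ is $2\pi$-periodic and the integral runs over a full period, $S$ is unchanged under a common shift $\o_i\mapsto\o_i+c$ and depends on the $\o_i$ only modulo $2\pi$; so it suffices to treat $\o_2=0$, $\o_1=\widehat{\o_1-\o_2}$, where the hypothesis reads $\e_1+\e_2<\o_1<2\pi-\e_1-\e_2$ and the claim becomes $S(\alpha_{\o_1}^{\e_1},\alpha_0^{\e_2})=\o_1-\pi$.

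Two elementary facts about the functions of \eqref{DefAlpha} are needed. First, differentiating the defining convolution and using that the sawtooth $z\mapsto\hat z$ has derivative $1$ off $2\pi\Z$ with a downward jump of $2\pi$ at each lattice point, one gets $(\alpha^\e)'(x)=1-2\pi\sum_{k\in\Z}\chi_\e(x-2\pi k)$, and on $[0,2\pi)$, because $\e<\tfrac{\pi}{2}$, only the terms $k=0,1$ survive. Second, as already noted in the text $\alpha^\e(t)=t$ for $t\in(\e,2\pi-\e)$; moreover the evenness $\chi_\e(-x)=\chi_\e(x)$ together with $\hat z+\widehat{-z}=2\pi$ for $z\notin2\pi\Z$ yields the reflection identity $\alpha^\e(x)+\alpha^\e(-x)=2\pi$, whence $\int_0^{2\pi}\alpha^\e=2\pi^2$ (a value unaffected by a shift of the argument, by periodicity).

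Now insert $(\alpha^{\e_2})'(x)=1-2\pi[\chi_{\e_2}(x)+\chi_{\e_2}(x-2\pi)]$ into the integral and split it. The ``$1$''-term gives $\tfrac{1}{2\pi}\int_0^{2\pi}\alpha_{\o_1}^{\e_1}(x)\,dx=\tfrac{1}{2\pi}\cdot2\pi^2=\pi$. The remaining term, $-\int_0^{2\pi}\alpha_{\o_1}^{\e_1}(x)[\chi_{\e_2}(x)+\chi_{\e_2}(x-2\pi)]\,dx$, collapses by $2\pi$-periodicity of $\alpha^{\e_1}$ to $-\int_{-\e_2}^{\e_2}\alpha^{\e_1}(x-\o_1)\chi_{\e_2}(x)\,dx$. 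Here the constraint $\e_1+\e_2<\o_1<2\pi-\e_1-\e_2$ is exactly what is needed: it forces $x-\o_1+2\pi\in(\e_1,2\pi-\e_1)$ for all $x\in[-\e_2,\e_2]$, so on the whole support of $\chi_{\e_2}$ one may use $\alpha^{\e_1}(x-\o_1)=\alpha^{\e_1}(x-\o_1+2\pi)=x-\o_1+2\pi$; the smeared discontinuity of $\alpha^{\e_1}$ never reaches that support. Then $\int_{-\e_2}^{\e_2}(x-\o_1+2\pi)\chi_{\e_2}(x)\,dx=2\pi-\o_1$, using $\int\chi_{\e_2}=1$ and $\int x\,\chi_{\e_2}(x)\,dx=0$ by symmetry. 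Adding the two pieces gives $S=\pi-(2\pi-\o_1)=\o_1-\pi=\widehat{\o_1-\o_2}-\pi$.

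I expect the only genuinely delicate point to be the bookkeeping in this last step: identifying which periodic copy of the discontinuity of $\alpha^{\e_1}$ is in play after the shift by $\o_1$, and checking that the stated $\e$-dependent window for $\widehat{\o_1-\o_2}$ is precisely the one keeping that smeared discontinuity off $\operatorname{supp}\chi_{\e_2}$ — which is what makes the two smearing widths $\e_1,\e_2$ enter additively. Everything else is a routine application of Lemma~\ref{SchwingerTerm} and the symmetry of $\chi_\e$.
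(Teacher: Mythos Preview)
Your proof is correct and follows essentially the same route as the paper: reduce by rotation invariance to a single parameter $\o=\widehat{\o_1-\o_2}$, compute $(\alpha^{\e_2})'=1-2\pi\sum_k\chi_{\e_2}(\cdot-2\pi k)$, use $\int_0^{2\pi}\alpha^{\e}=2\pi^2$ for the constant piece, and then exploit the support condition on $\o$ to reduce the remaining $\chi_{\e_2}$-integral to a linear computation. The only cosmetic difference is that the paper re-expands $\alpha^{\e_1}_{\o}$ into its defining convolution and evaluates the resulting double integral $\int\!\!\int(\widehat{x-y-\o})\chi_{\e_1}(y)\chi_{\e_2}(x)\,dy\,dx$ via the winding-number decomposition of the sawtooth, whereas you invoke directly the already-established linearity $\alpha^{\e_1}(t)=t$ on $(\e_1,2\pi-\e_1)$ after shifting by one period; both arguments use the hypothesis $\e_1+\e_2<\hat\o<2\pi-\e_1-\e_2$ in exactly the same way.
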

\begin{proof}
Evidently the Schwinger term is invariant under simultaneous rotation of $\a_{\o_1}^{\e_1}$ and $\a_{\o_2}^{\e_2}$ so for simplicity we write $\o \equiv \o_1-\o_2$ and calculate
\beq
S(\a_\o^{\e_1},\a^{\e_2}) = \frac{1}{2\pi} \int_0^{2\pi} dx\, \a_\o^{\e_1}(x) (\a^{\e_2})'(x).
\eeq
For this we first need $(\a^\e)'$ which turns out to be
\beq
\begin{split}
(\a^\e)'(x) &= \int dy\, \frac{d}{dx}(\widehat{x-y}) \chi_\e(y) \\
&= \int dy\left(1-2\pi\sum_{k\in\Z}\delta(x-y-2\pi k)\right)\chi_\e(y) \\
&= 1-2\pi\sum_{k\in\Z}\chi_\e(x-2\pi k).
\end{split}
\eeq
Note that, because of the support properties of $\chi_\e$, for every $x\in\R$ only one term in the sum $\sum_{k\in\Z}\chi_\e(x-2\pi k)$ is nonzero. Hence one gets
\beq
\begin{split}
S(\alpha_{\o_1}^{\e_1},\alpha_{\o_2}^{\e_2}) &= \int_0^{2\pi} dx\, \a_\o^{\e_1}(x)\big(\frac{1}{2\pi} - \chi_{\e_2}(x)-\chi_{\e_2}(x-2\pi)\big) \\
&= \pi - \int_{-\e_2}^{\e_2} dx \int_{-\e_1}^{\e_1} dy\, (\widehat{x-y-\o})\chi_{\e_1}(y)\chi_{\e_2}(x),
\end{split}
\eeq
by using the periodicity of $\alpha^{\e_1}$ and the fact that $\int dx\, \alpha_\o^{\e_1}(x) = 2\pi^2$ is independent of $\o$ and $\chi_{\e_1}$. To compute the remaining term we need to calculate integrals of the form
\beq
\int_{-\e}^{\e}dx\, (\widehat{c-x})\chi_{\e}(x), \hspace{1em} \mbox{ for } c \in \R \mbox{ and } \e < \hat{c} < 2\pi - \e.
\eeq
For this purpose remember that we can write $(\widehat{c-x}) = (c-x) - 2\pi n(c-x)$ and that $n(c-x) = n(c)$ for $\e<\hat{c}<(2\pi -\e)$ and $-\e < x < \e$. Inserting this into the integral we get
\beq
\int_{-e}^{\e}dx\, (\widehat{c-x})\chi_\e(x) = \int_{-\e}^{\e} dx\, \big((c-x)-2\pi n(c)\big)\chi_\e(x) = c-2\pi n(c) = \hat{c}.
\eeq
Using this in the expression for $S(\alpha_{\o_1}^{\e_1},\alpha_{\o_2}^{\e_2})$ one immediately arrives at
\beq
S(\alpha_{\o_1}^{\e_1},\alpha_{\o_2}^{\e_2}) = \pi - (\widehat{-\o}) = \hat{\o}-\pi.
\eeq
\end{proof}
With this proposition equation \eqref{SchwingerTermRelation} then leads to the constraint
\bneq
\begin{split}
\l^2 (\hat{\o} - \pi) - \hat{\o} \overset{!}{=} -2s(\hat{\o}-\pi) \pm \pi
\end{split}
\eneq
for the parameters $\l$ and $s$. In order to get a solution for $\l$ we have to choose the minus sign on the right side and for a fixed spin $s$ this equation then restricts the parameters to
\bneq
\l^2 = 1-2s.
\eneq
Together with the requirement that the operator $e^{i\l\a^\e}$ has to be unitary this also shows that the ``spin'' parameter $s$ has to satisfy $s\in (-\infty,\frac{1}{2})$. \\ \\
We can now summarize the construction in the following way: For every $\o\in\R$, $\e>0$ and symmetric real-valued smearing function $\chi_\e$ with $\operatorname{supp}\chi_\e \in (-\e,\e)$ one can construct field operators
\bneq
\Phi_\o[\chi_\e] = U(\o)\hG(V)\hG(e^{i\l\a^\e})U(\o)^* = e^{is\o(2Q-1)}\hG(V_\o)\hG(e^{i\l\a_\o^\e}),
\eneq
which raise the charge by one and are localized in the interval
\bneq
\widetilde{I}^\e_\o := \{x\in\widetilde{S_1}|\, \o-\e<x<\o+\e\} \subset \widetilde{S_1}
\eneq
with width $2\e$ centered around the point $\o \in \R$. Together with the adjoint field $\Phi_\e[\chi_\e]^*$, lowering the charge by one, these operators then generate a net of algebras on the space $\widetilde{S_1}$ with anyonic statistics.
\begin{lemma}
Consider field operators $\Phi_i = \Phi_{\o_i}[\chi_{\e_i}], i=1,2$, which are localized in intervals $\tI_1$ and $\tI_2$ respectively. If the intervals are non-intersecting, i.e. $I_1\cap I_2=\emptyset$, the fields satisfy anyonic commutation relations
\bneq \begin{split}
\Phi_1\, \Phi_2 &= -e^{2\pi i s(2 N(\tI_1,\tI_2)+1)}\, \Phi_2\, \Phi_1, \\
\Phi_1\, \Phi_2^* &= -e^{-2\pi i s(2 N(\tI_1,\tI_2)+1)}\, \Phi_2^*\, \Phi_1,
\end{split} \eneq
where $N(\tI_1,\tI_2)$ is the relative winding number of $\tI_1$ w.r.t $\tI_2$, defined in \eqref{IntervalWindingNumber}.
\end{lemma}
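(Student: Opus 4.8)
The plan is to strip off the charge-dependent prefactor $e^{is\o(2Q-1)}$ that distinguishes $\Phi_\o[\chi_\e]$ from the auxiliary field $\hat{\Phi}_\o=\hG(V_\o)\hG(e^{i\l\a_\o^\e})$, reduce everything to the auxiliary commutation relations of Section~\ref{BasicIdea}, and then insert the explicit Schwinger term from the preceding Proposition. First I would use that $\hat{\Phi}_\o$ raises the charge by one, so $\hat{\Phi}_\o\,f(Q)=f(Q-1)\,\hat{\Phi}_\o$ for any function $f$, to move both prefactors $e^{is\o_i(2Q-1)}$ to the left in $\Phi_1\Phi_2$ and in $\Phi_2\Phi_1$. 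Because $\hat{\Phi}_{\o_1}\hat{\Phi}_{\o_2}$ and $\hat{\Phi}_{\o_2}\hat{\Phi}_{\o_1}$ carry the same total charge shift, both products acquire the \emph{same} operator-valued factor $e^{is(\o_1+\o_2)(2Q-1)}$, with only a scalar $e^{-2is\o_2}$, resp.\ $e^{-2is\o_1}$, left over; hence the relation between $\Phi_1\Phi_2$ and $\Phi_2\Phi_1$ is governed by the honest scalar $e^{2is(\o_1-\o_2)}$ times the scalar relating $\hat{\Phi}_{\o_1}\hat{\Phi}_{\o_2}$ to $\hat{\Phi}_{\o_2}\hat{\Phi}_{\o_1}$.

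Next I would establish that auxiliary scalar, i.e.\ the identity $\hat{\Phi}_{\o_1}\hat{\Phi}_{\o_2}=e^{-i(\o_1-\o_2)}\,e^{i\l^2 S(\a_{\o_1}^{\e_1},\a_{\o_2}^{\e_2})}\,\hat{\Phi}_{\o_2}\hat{\Phi}_{\o_1}$, from three facts already at hand: (a) $\hG(V_{\o_1})\hG(V_{\o_2})=e^{-i(\o_1-\o_2)}\hG(V_{\o_2})\hG(V_{\o_1})$, which follows from $U_0(\o)\hG(V)U_0(\o)^*=\hG(V)e^{-i\o Q}$ together with $VQ=(Q-1)V$; (b) the phase obtained when commuting $\hG(e^{i\l\a_\o})$ past $\hG(V_{\o'})$ depends only on the integral of $\a$, hence is the same in both orderings and cancels; (c) $\hG(e^{i\l\a_{\o_1}})\hG(e^{i\l\a_{\o_2}})=e^{i\l^2 S(\a_{\o_1},\a_{\o_2})}\hG(e^{i\l\a_{\o_2}})\hG(e^{i\l\a_{\o_1}})$ from Lemma~\ref{ImplementerRelations}. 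It is essential here that all the one-particle operators involved are multiplication operators and thus commute, so that Lemma~\ref{ImplementerRelations} applies, and that the identities are read on a common dense domain of finite-particle vectors on which the implementers are defined.

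The geometric input is the observation that the hypothesis $I_1\cap I_2=\emptyset$ is precisely the condition $\e_1+\e_2<(\widehat{\o_1-\o_2})<2\pi-\e_1-\e_2$ on the interval centres $\o_1,\o_2$; this makes the preceding Proposition applicable and yields $S(\a_{\o_1}^{\e_1},\a_{\o_2}^{\e_2})=(\widehat{\o_1-\o_2})-\pi$, and it also guarantees that $n(\o_1-\o_2)$ does not change as $\o_1,\o_2$ vary within $\tI_1,\tI_2$, so that $N(\tI_1,\tI_2):=n(\o_1-\o_2)$ is well defined (and the consistency relations $N(\tI_2,\tI_1)=-N(\tI_1,\tI_2)-1$, $N(\tI_1+2\pi,\tI_2)=N(\tI_1,\tI_2)+1$ follow directly from the definition of $n$). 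Now I substitute $\l^2=1-2s$, write $\o_1-\o_2=(\widehat{\o_1-\o_2})+2\pi N(\tI_1,\tI_2)$, and collect the exponents of $e^{2is(\o_1-\o_2)}$, $e^{-i(\o_1-\o_2)}$ and $e^{i(1-2s)((\widehat{\o_1-\o_2})-\pi)}$: all terms proportional to $(\widehat{\o_1-\o_2})$ cancel and what survives is $2\pi s(2N(\tI_1,\tI_2)+1)-2\pi N(\tI_1,\tI_2)-\pi$, so the total scalar equals $-\,e^{2\pi i s(2N(\tI_1,\tI_2)+1)}$ since $e^{-2\pi i N(\tI_1,\tI_2)}=1$ and $e^{-i\pi}=-1$; the residual minus sign is exactly the $\pm$ that had to be fixed to the lower choice when solving $\l^2=1-2s$. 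For the second relation I would rerun the computation with $\hat{\Phi}_{\o_2}$ replaced by $\hat{\Phi}_{\o_2}^*=\hG(e^{-i\l\a_{\o_2}^{\e_2}})\hG(V_{\o_2}^*)$, which flips the sign of the charge shift (so $V_{\o_2}^*Q=(Q+1)V_{\o_2}^*$) and of $\a_{\o_2}$; this conjugates the exchange phase $e^{2\pi i s(2N+1)}\mapsto e^{-2\pi i s(2N+1)}$ while the overall $-1$ is unchanged.

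The step I expect to be the actual work is the phase bookkeeping: keeping precise track of which exponentials are scalars and which are charge-dependent operators when the prefactors $e^{is\o(2Q-1)}$ are commuted through the $\hat{\Phi}$'s --- exactly where the non-second-quantized structure of $U(\o)=e^{is\o Q^2}U_0(\o)$ could bite --- and then checking that every contribution proportional to $(\widehat{\o_1-\o_2})$ genuinely cancels, so that the surviving phase is an integer-winding expression and not a position-dependent one. Translating ``disjoint projections'' cleanly into the hypothesis of the Schwinger-term Proposition, and verifying the consistency properties of $N(\tI_1,\tI_2)$, is a minor but necessary extra argument; the domain questions (a common dense core for all the implementers) should be routine given the references cited for implementable Bogoliubov transformations.
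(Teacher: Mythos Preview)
Your proposal is correct and follows precisely the route the paper takes: the lemma is not given a separate proof in the paper but is presented as the synthesis of the computations in Sections~\ref{BasicIdea} and~4.3, namely stripping off the prefactor $e^{is\o(2Q-1)}$ via the charge-shift relation, reducing to the auxiliary phase $e^{-i(\o_1-\o_2)}e^{i\l^2 S(\a_{\o_1},\a_{\o_2})}$, inserting the Schwinger term from the preceding Proposition, and then using $\l^2=1-2s$ together with $\o_1-\o_2=(\widehat{\o_1-\o_2})+2\pi N$ to obtain $-e^{2\pi i s(2N+1)}$. Your treatment of the adjoint relation and the explicit cancellation bookkeeping even make explicit a couple of steps the paper leaves to the reader.
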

In addition the fields are covariant with respect to the unitary representation \eqref{RotRep} of the universal covering of the rotation $\widetilde{U(1)}$ with real-valued spin $s\subset (-\infty,\frac{1}{2})$. Taking the polynomial algebra over such localized fields then defines the local algebras $\mathcal{F}(\tI)$ and we have therefore constructed a local, covariant quantum field net for Anyons on the circle satisfying all the requirements of section \ref{aim}.

\vspace{1em}
\subsection{Special Cases}
\textbf{s=1/2:} \\
For the maximal value $s=\frac{1}{2}$ the relation $\l^2 = 1-2s$ leads to $\l = 0$, which means the operator $e^{i\l\a^\e}$ is the identity operator in this case. We are therefore left with only the shift operator $V$ and the field is
\bneq
\Phi_\o = e^{i\frac{\o}{2}(2Q-1)}\hG(V)e^{-i\o Q} = e^{i\frac{\o}{2}}\hG(V),
\eneq
which is just $\hG(V)$ times a constant phase factor which could also be omitted. This shows that the fields for different $\o$'s commute independently of $\o$ which is in accordance with the fact that the exchange phase turns out to be
\beq
-e^{2\pi i s(2n+1)} = -e^{\pi i (2n+1)} = 1.
\eeq
So we see that for $s=\frac{1}{2}$ we get a \emph{bosonic field} ``$\Phi \sim \hG(V) \sim \hG(e^{ix})$'' and because of its simple form it creates one-particle vectors from the vacuum, namely
\beq
\Phi_{\o} \Omega = e^{i\frac{\o}{2}}\hG(V)\Omega = e^{i\frac{\o}{2}} e_0.
\eeq
Moreover, taking into account that $Z_{\pm\mp}=0$ for $V_{-+}=0$, the explicit form of $\hG(V)$ (with an appropriate choice of phase) turns out to be
\bneq
\hG(V) = a^*(e_0)\Gamma_+(-V_{++})\Gamma_-(-\overline{V_{--}}) + \Gamma_+(-V_{++})\Gamma_-(-\overline{V_{--}}) b(\overline{e_-}),
\eneq
which looks like a free field modified by the unitary operator $\Gamma_+(-V_{++})\Gamma_-(-\overline{V_{--}})$. \\

\begin{wrapfigure}{r}[0pt]{0.4\textwidth}
     \includegraphics[width=0.4\textwidth]{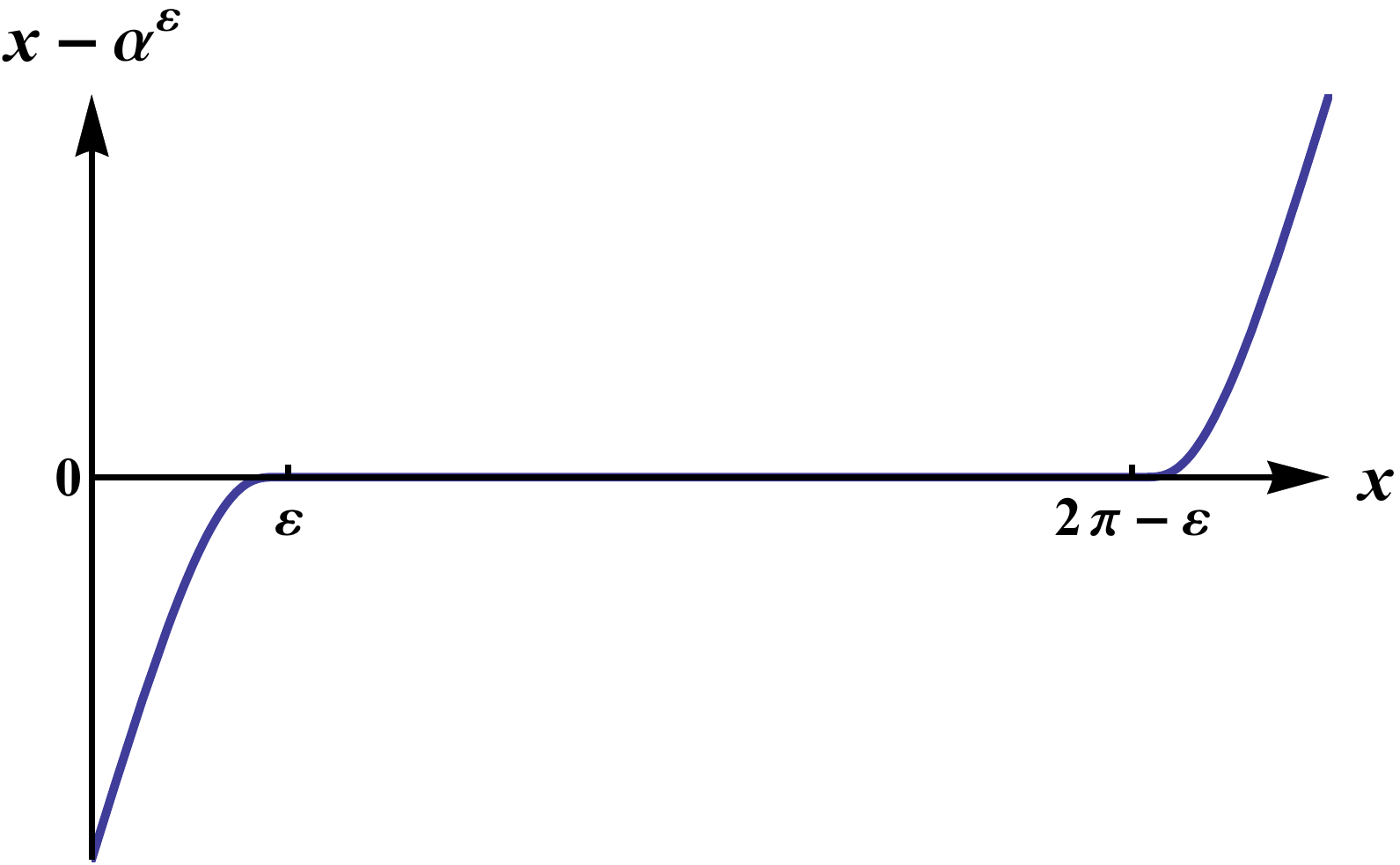}
\end{wrapfigure}
\textbf{s=0:} \\
The case $s=0$ leads via $\l^2 = 1-2s$ to $\l = \pm 1$ and we will only consider the case $\l=-1$. As was already shown the function $\a^\e$ is linear in the interval $(\e,2\pi-\e)$ so for $\l=-1$ the function $x\mapsto\hat{x}+\l\a^\e(x) = \hat{x}-\a^\e(x)$ vanishes for $\e<x<2\pi-\e$. Therefore $e^{i x}e^{-i\a^\e(x)}$ is simply the identity apart from a small interval of length $2\e$ where its phase changes by the value $2\pi$. \\
A similar so-called ``blip function'' was used in \cite{Seg81} and \cite{CarLang98,CarLang10} to approximate a step-function on the circle. There it was shown that after taking an appropriate limit the implementer of such a function converges to the free fermi field on the circle. Because of
\beq
-e^{2\pi i s(2n+1)} = -1, \mbox{ for } s=0,
\eeq
we will also get an anti-commuting field in our case, but it is unclear if the field converges in some sense to a free fermi field if the smearing function $\chi_\e$ tends to a delta function. \\ \\
\emph{Remark:} As we have seen the construction leads to a commuting field for $s=\frac{1}{2}$ and an anti-commuting field for $s=0$. Howerver, we are working on a one-dimensional space --- the circle $S_1$ or its universal covering $\widetilde{S_1}$ respectively --- so although we called the parameter $s$ the ``spin'' it really just labels a representation of the translations on $S_1$ or $\widetilde{S_1}$. So one would not expect the usual kind of spin-statistics theorem to hold for $s$ in our case.
\vspace{1em}

\section{Non-relativistic ``String-local'' Fields on $\R^2$}
It would now be tempting to try the same construction in higher dimensions, e.g. on the Hilbert space $L^2(\R^2)$, to construct a string-local quantum field in two (space-)dimensions, covariant under the Euclidean group $E(2)$ or its universal cover $\widetilde{E(2)}$ respectively. However, such direct attempts are facing serious difficulties concerning either covariance or the Hilbert-Schmidt condition of the occurring operators, because the method of implementing multiplication operators as Bogoliubov transformations is basically restricted to one dimension. We will therefore try to circumvent these problems by considering simply a tensor product of a local field on $\R^2$ with the previously constructed circle-fields. \\ \\
For this purpose consider the following field on the anti-symmetric Fock space $\mathcal{F}_a(L^2(\R^2))$,
\bneq
\Psi(f) := c^*(f) + c(\overline{f}) , \hspace{1em} \Psi(\overline{f})^* = \Psi(f),
\eneq
for $f \in L^2(\R^2)$, where $c$ and $c^*$ are the usual annihilation and creation operators on $\mathcal{F}_a(L^2(\R^2))$. From the anti-commutation relations of $c$ and $c^*$,
\beq
 \{c(f),c(g)\} = 0, \hspace{1em} \{c(f),c^*(g)\} = \langle f,g\rangle,
\eeq
it follows that this field satisfies
\beq
\Psi(f) \Psi(g) = - \Psi(g) \Psi(f),
\eeq
if the testfunctions $f$ and $g$ are such that $\operatorname{supp} f \cap \operatorname{supp} g = \emptyset$. In addition $\Psi$ is covariant with respect to to the second quantization of the pullback representation $\mathcal{U}$ of $E(2)$ defined on $L^2(\R^2)$ according to
\bneq
(\mathcal{U}(\vec{a},\o)f)(\vec{x}) := f(R(-\o)(\vec{x}-\vec{a})),
\eneq
where $R(\o)$ is the usual rotation matrix acting on vectors in $\R^2$. Now take as Hilbert space the tensor product
\bneq
\mH = \mathcal{F}_a(L^2(\R^2))\otimes\mathcal{F}_a(L^2(S_1)),
\eneq
with a representation of $\widetilde{E(2)}$ of the form
\bneq \label{TensorRepresentation}
\hG(\mathcal{U}(\vec{a},\o))\otimes e^{is\o Q^2}\hG(U_1(\o)).
\eneq
For every $f\in L^2(\R^2)$, admissible $\chi_\e$ and $\o\in\R$ one can then define on this space the new fields
\bneq \label{DefCombinedFields}
F_\o[f,\chi_\e] := \Psi(f)\otimes \Phi_\o[\chi_\e].
\eneq
They of course inherit the anyonic commutation relations and are covariant under the representation \eqref{TensorRepresentation}, where the translations only act on the first tensor factor (shifting the support of the test function) and the rotations act on both. The motivation behind this definition is that these field operators can be interpreted as being localized in \emph{conelike regions} on the two-dimensional plane. More specifically consider the following subset of $\R^2$,
\bneq \label{DefCones}
C[f,I_\o^\e] := \operatorname{supp}f + \R_+ \bigcup_{\mu\in I^\e_\o} \vec{n}_\mu \subset \R^2, \footnote{Expressions of the form $\R_+ \mathcal{O}$ for $\mathcal{O}\subset \R^2, \vec{0}\notin \mathcal{O}$ are shorthand for $\{\nu\, \mathcal{O}|\ \nu \in \R, \nu >0\} \subset \R^2$.}
\eneq
where $\vec{n}_\mu \in \R^2$ is a unit vector in the direction $\mu$, i.e.
\beq
\vec{n}_\mu := R(-\mu)\vec{n}_0,
\eeq
with a standard unit-vector $\vec{n}_0$, e.g. $\vec{n}_0 = \binom{0}{1}$. This defines a cone-shaped region which ``starts'' in $\operatorname{supp}f$ and extends to infinity in the set of directions given by $I_\o^\e$. Since a translation only acts on the test function $f$ it shifts the whole cone, whereas a rotation also changes the interval $I$ corresponding to the asymptotic directions of the cone $C[f,I]$. The interpretation of such cones as localization regions for the fields \eqref{DefCombinedFields} is then possible because of the following commutation relations, which result from the anti-commutativity of the local field $\Psi$ and the anyonic commutation relations of the circle field $\Phi$.
\begin{theorem}
For compactly localized test functions $f_1, f_2$ and intervals $\tI_1 = \tI_{\o_1}^{\e_1}$ and $\tI_2 = \tI_{\o_2}^{\e_2}$ such that the corresponding cones do not intersect, i.e.
\beq
C[f_1,I_1] \cap C[f_2,I_2] = \emptyset,
\eeq
the fields $F_1 = F_{\o_1}[f_1,\chi_{\e_1}]$ and $F_2 = F_{\o_2}[f_2,\chi_{\e_2}]$ satisfy
\bneq \begin{split}
F_1\, F_2 &= e^{2\pi i s (2 N(\tI_1,\tI_2)+1)}\ F_2\, F_1, \\
F_1\, F_2^* &= e^{-2\pi i s (2 N(\tI_1,\tI_2)+1)}\ F_2^*\, F_1.
\end{split} \eneq
\end{theorem}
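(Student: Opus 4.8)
The plan is to reduce the assertion to two facts already at hand --- the anticommutativity $\Psi(f_1)\Psi(f_2)=-\Psi(f_2)\Psi(f_1)$ of the field on $\R^2$ for test functions with disjoint supports (which follows from the CAR of $c,c^*$), and the anyonic commutation relations of the circle fields $\Phi_1,\Phi_2$ for base intervals with $I_1\cap I_2=\emptyset$ (the lemma giving the anyonic commutation relations of $\Phi_1,\Phi_2$) --- and then to combine them using the multiplicativity of the tensor product. The only new input is a short geometric argument turning the hypothesis $C[f_1,I_1]\cap C[f_2,I_2]=\emptyset$ into those two separate conditions.

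First I would establish that each cone already contains the interior of the corresponding support $\operatorname{supp}f_i$: if $p\in\operatorname{int}(\operatorname{supp}f_i)$, then for any direction $\mu\in I_i$ and all sufficiently small $\nu>0$ one has $p-\nu\vec{n}_\mu\in\operatorname{supp}f_i$, hence $p=(p-\nu\vec{n}_\mu)+\nu\vec{n}_\mu\in C[f_i,I_i]$. Consequently a point lying in both $\operatorname{int}(\operatorname{supp}f_1)$ and $\operatorname{int}(\operatorname{supp}f_2)$ would lie in $C[f_1,I_1]\cap C[f_2,I_2]$; since the cones are assumed disjoint, the interiors of the two supports are disjoint, so that $\Psi(f_1)$ and $\Psi(f_2)$ --- and likewise $\Psi(f_1)$ and $\Psi(\overline{f_2})=\Psi(f_2)^*$ --- anticommute, the only possible overlap of the closed supports being a boundary set which is negligible for the inner products entering the CAR. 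Moreover $N(\tI_1,\tI_2)$ is only defined when the base intervals do not intersect, so $I_1\cap I_2=\emptyset$ is implicitly in force and the circle-field lemma applies.

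It then remains to assemble the pieces. Writing $N\equiv N(\tI_1,\tI_2)$, the two ingredients read $\Psi(f_1)\Psi(f_2)=-\Psi(f_2)\Psi(f_1)$, $\Psi(f_1)\Psi(f_2)^*=-\Psi(f_2)^*\Psi(f_1)$, and $\Phi_1\Phi_2=-e^{2\pi is(2N+1)}\Phi_2\Phi_1$, $\Phi_1\Phi_2^*=-e^{-2\pi is(2N+1)}\Phi_2^*\Phi_1$. Since operators on different tensor factors commute and $(A\otimes B)(C\otimes D)=AC\otimes BD$, I would compute
\[
F_1F_2=\Psi(f_1)\Psi(f_2)\otimes\Phi_1\Phi_2=(-1)\bigl(-e^{2\pi is(2N+1)}\bigr)\,\Psi(f_2)\Psi(f_1)\otimes\Phi_2\Phi_1=e^{2\pi is(2N+1)}\,F_2F_1,
\]
and in the same way $F_1F_2^*=(-1)(-1)\,e^{-2\pi is(2N+1)}\,F_2^*F_1=e^{-2\pi is(2N+1)}\,F_2^*F_1$, which is the claim.

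The point worth stressing is that the two ``wrong'' minus signs --- the Fermi sign of $\Psi$ and the extra minus in the exchange phase of the circle fields --- cancel, so that the combined field carries exactly the anyonic phase $e^{2\pi is(2N+1)}$ with no residual sign. Accordingly the only real obstacle is the first step: checking that ``the cones do not intersect'' genuinely produces both disjoint supports (because the apex region of each cone is fattened by $\operatorname{supp}f$, so overlapping supports would force the cones to meet) and the setting $I_1\cap I_2=\emptyset$ in which $N(\tI_1,\tI_2)$ is meaningful. Everything after that is the routine tensor-product algebra displayed above.
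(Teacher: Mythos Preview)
Your proposal is correct and follows essentially the same approach as the paper: deduce from $C[f_1,I_1]\cap C[f_2,I_2]=\emptyset$ that both $\operatorname{supp}f_1\cap\operatorname{supp}f_2=\emptyset$ and $I_1\cap I_2=\emptyset$, then feed these into the anticommutation of $\Psi$ and the anyonic commutation relations of $\Phi$ and combine via the tensor product, so that the two minus signs cancel. The only difference worth noting is that the paper infers $I_1\cap I_2=\emptyset$ geometrically from the disjointness of the cones (a shared asymptotic direction would make the cones overlap far out), whereas you read it off from the well-definedness of $N(\tI_1,\tI_2)$ in the statement; the geometric route is cleaner, since it shows the conclusion genuinely follows from the stated hypothesis rather than from an implicit side condition.
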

\begin{proof}
From the definition \eqref{DefCones} it is clear that if the cones $C[f_1,I_1]$ and $C[f_2,I_2]$ are disjoint then also the testfunction supports and intervals have to satisfy\footnote{Note that the converse is not true, namely there can be non-overlapping test functions and intervals such that the corresponding cones actually have a (finite) overlap.}
\beq
\operatorname{supp}f_1 \cap \operatorname{supp}f_2 = \emptyset, \mbox{ and }
I_1\cap I_2 = \emptyset.
\eeq
But under this conditions we get that
\beq
\begin{split}
\Psi(f_1) \Psi(f_2) &= (-1) \Psi(f_2) \Psi(f_1), \mbox{ and} \\
\Phi_{\o_1}[\chi_{\e_1}]\, \Phi_{\o_2}[\chi_{\e_2}] &= (-1) e^{2\pi i s (2 N(\tI_1,\tI_2)+1)}\ \Phi_{\o_2}[\chi_{\e_2}]\, \Phi_{\o_1}[\chi_{\e_1}].
\end{split}
\eeq
Considering the simple tensor product structure of the fields $F_\o[f,\chi_\e]$ then leads to the asserted commutation relations.
\end{proof}

\emph{Remark:} In addition to the conelike localization regions $C[f,I_\o^\e]$ \eqref{DefCones} (which are obviously invariant under $\o \mapsto \o + 2\pi$), the fields also depend on the winding number $n(\o)$ of $\o$, which determines the commutation relations. To account for this fact the fields can be interpreted to be localized in \emph{``generalized cones''} (or ``paths of cones'', see e.g. \cite{Mund03}) which are defined in the following way. A usual cone $C$ in two dimensions is determined by a point $\vec x\in \R^2$ (its apex) and an interval $I$ on the circle, specifying the asymptotic directions contained in $C$. Hence one can denote a cone by the pair $C = (\vec x, I)$, where the width of $I$ determining the opening angle of $C$ should be smaller than $\pi$. \par
If we now allow for generalized intervals $\tI$ on the universal covering $\widetilde{S_1}$ of the circle one can define generalized cones $\tilde{C}$ as pairs
\bneq
\tilde{C} = (\vec x,\tI), \hspace{1em} \vec x\in \R^2, \tI \subset \widetilde{S_1}.
\eneq
After smearing in $\vec x$ with a testfunction $f$ we can obviously also define ``smeared'' generalized cones as $\tilde{C} = (f,\tI) \equiv (\operatorname{supp} f,\tI)$. \\
\begin{figure}[H]
\centering
\begin{tikzpicture}[scale=0.8]
\begin{scope}
\fill[black!10] (2,2) -- (3,1) -- (6,2) -- (4.4,4.4) -- cycle;
\draw[dashed] (0,0) -- (2,2);
\draw[dashed] (0,0) -- (3,1);
\draw[fill=black!20] plot [smooth cycle] coordinates {(2.1,1.3) (2,2) (3.2,1.8) (3,1)};
\draw node at (2.6,1.6) {$f_2$};
\draw node at (4.2,2.5) {$\tilde{\mathcal{C}}(f_2,\tilde{I_2})$};
\draw[thick] (2,2) -- (4.4,4.4);
\draw[thick] (3,1) -- (6,2);
\draw[dashed] (0,0) circle(2);
\draw[thick] (1.414,1.414) arc(45:18.4:2);
\draw[thick] (1.334,1.334) -- (1.494,1.494);
\draw[thick] (1.78,0.59) -- (2.025,0.67);
\draw node at (1.4,0.9) {$\tilde{I_2}$};
\draw[dashed] (0,0) -- (0,-2.1);
\draw node at (-0.3,-1.8) {\small$\o_0$};
\draw[->] (0,-1.2) arc(-90:30:1.2);
\end{scope}
\begin{scope}[shift={(-5,1)},rotate=80,scale=0.9]
\fill[black!10] (2,2) -- (3,1) -- (6,2) -- (4.4,4.4) -- cycle;
\draw[dashed] (0,0) -- (2,2);
\draw[dashed] (0,0) -- (3,1);
\draw[fill=black!20] plot [smooth cycle] coordinates {(2.1,1.3) (2,2) (3.2,1.8) (3,1)};
\draw node at (2.6,1.6) {$f_1$};
\draw node at (4.2,2.5) {$\tilde{\mathcal{C}}(f_1,\tilde{I_1})$};
\draw[thick] (2,2) -- (4.4,4.4);
\draw[thick] (3,1) -- (6,2);
\draw[dashed] (0,0) circle(2);
\draw[thick] (1.414,1.414) arc(45:18.4:2);
\draw[thick] (1.334,1.334) -- (1.494,1.494);
\draw[thick] (1.78,0.59) -- (2.025,0.67);
\draw node at (1.4,0.9) {$\tilde{I_1}$};
\draw[dashed] (0,0) -- (-2,-0.3);
\draw node at (-1.7,0) {\small$\o_0$};
\draw[->] (-1.1,-0.16) arc(-180:40:1.2);
\end{scope}
\end{tikzpicture}
\caption{Non-intersecting generalized cones $\tilde{\mathcal{C}}(f_1,\tilde{I_1})$ and $\tilde{\mathcal{C}}(f_2,\tilde{I_2})$}
\end{figure}
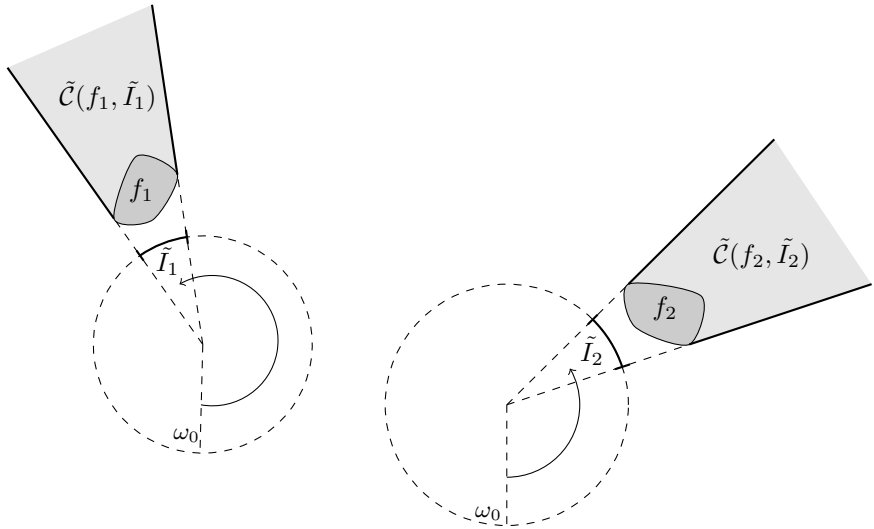 \vspace{1em}

We can therefore construct again, for every generalized cone $\tilde{C}$, the polynomial algebra $\mathcal{F}(\tilde{C})$ of fields localized in $\tilde{C}$. This net $\tilde{C} \mapsto \mathcal{F}(\tilde{C})$ then satisfies the same requirements for an Anyonic field net as defined in section \ref{aim}, with the obvious generalizations that we replace intervals $\tilde{I}$ with cones $\tilde{C}$ and covariance holds w.r.t. a representation of $\widetilde{E(2)}$ instead of just $\widetilde{U(1)}$. \\ \vspace{1em}

This construction now has an evident generalization to an arbitrary (interacting) field algebra on three dimensional Minkowski space in the following way. First note that one can define a generalized cone also on $\R^{1+2}$ as a pair $(\vec{x},\tI)$, where $\vec{x}\in\R^{1+2}$ is the apex of the cone. Consider then a local and covariant (Bosonic or Fermionic) field net $\mathcal{O} \mapsto \mathcal{F}(\mathcal{O})$ on $\R^{1+2}$, indexed by double cones $\mathcal{O} \subset \R^{1+2}$. The field algebra $\mathcal{F}(\tilde{C})$ for a an arbitrary generalized cone $\tilde{C} \equiv \tilde{C}(\vec{x},\tI)$ can then be defined as
\beq \mathcal{F}(\tilde{C}) = \bigcup_{\mathcal{O}\subset C} \mathcal{F}(\mathcal{O})\otimes\mathcal{F}(\tI), \eeq
where $\mathcal{F}(\tI)$ is just the algebra of fields on the (covering of the) circle localized in $\tI$. Note that in this definition we had to take the union over all double cones contained in $C$ in order to ensure isotony of the resulting net. This leads again to a $\widetilde{E(2)}$-covariant and twisted-local field net, localized in generalized cones $\tilde{C}$  with winding-number dependent commutation relations. \par
If the original net was interacting then also the new composite net has a nontrivial scattering matrix. However, in the construction of the scattering states one has to take into account the extended localization regions of the fields and the new S-matrix should also depend on the relative winding number of the occurring fields (see e.g. \cite{FreGabRue96, FroeMar91, BrosMund12}).


\vspace{1em}

\section{Summary and Outlook}
We have seen that it is possible to construct for every interval $\tilde{I}$ on the universal covering of the circle $\widetilde{S_1}$ a field algebra $\mathcal{F}(\tilde{I})$, such that the resulting net $\tilde{I} \mapsto \mathcal{F}(\tilde{I})$ is covariant under a representation of $\widetilde{U(1)}$ for arbitrary real-valued spin. For non-intersecting intervals the corresponding field algebras satisfy twisted commutation relations which only depend on the relative winding number of the intervals. This is achieved by considering unitary implementers of certain one-particle multiplication operators where the occurring Schwinger term is used to change the commutation relations of the operators in the desired way. In contrast to previous similar constructions this leads to a \emph{local} quantum field theory for Anyons on the circle. Moreover, everything is defined explicitly and non-perturbatively on the well known anti-symmetric Fock space and no taking of limits (or thereby leaving the Fock space) is needed. \\
Taking tensor products of these ``circle Anyons'' with any local covariant quantum field theory one can obtain a cone-local theory with Anyonic commutation relations. However, potential boost-covariance of the original field theory is lost so this only leads to a non-relativistic theory (covariant under $\widetilde{E(2)}$ plus potential time translations). A construction of a cone-local Anyonic field net, which is covariant under the full Poincaré group in $1+2$ dimensions would be desirable but unfortunately this cannot be achieved by the methods described here. The first problem is that the ``trick'' using an auxiliary ($2\pi$-periodic) field which then gets lifted to a covering space only works in this simple way for the pure rotation group. For the covering of the full Lorentz group $\tilde{\mathcal{L}}_+^\uparrow$ one would get a representation on Fock space of the form $U(\tilde{\Lambda}) \sim e^{isQ\Omega(\tilde{\Lambda})} U_0(\Lambda)$ where $\Omega(\tilde{\Lambda})$ is now an operator on the Hilbert space instead of a mere constant $\omega$ as for the group $\widetilde{U(1)}$ (see e.g. \cite{Mund03} for a possible representation of $\tilde{\mathcal{L}}_+^\uparrow$ on the mass shell). Another problem is that the method of considering multiplication operators on the one-particle Hilbert space as implementable Bogoliubov transformations leads to problems concerning the Hilbert-Schmidt property for theories in more than one dimension (see e.g. \cite{Lang94}). \\
Nevertheless, the above construction provides a simple example of a (cone-)local covariant -- and possibly interacting -- quantum field net which exhibits the main features expected in a full $2+1$ dimensional theory of Anyons, namely the relation between non-trivial behavior under $2\pi$-rotations and the dependence on some winding number of the cone-shaped localization regions.

\clearpage
\subsection*{Acknowledgments}
I would like to thank the Vienna Mathematical Physics group, especially Sabina Alazzawi, Christian Köhler and Jan Schlemmer for numerous supporting discussions. I am furthermore indebted to my advisor Jakob Yngvason for providing constant support. I also want to thank Alan Carey, Harald Grosse and Edwin Langmann for helpful comments on my work. \\
This work was partly supported by the FWF-project P22929-N16.

\vspace{2em}

\end{document}